\documentclass[aps,prl,reprint,superscriptaddress,nofootinbib,longbibliography,floatfix]{revtex4-2}

\usepackage[T1]{fontenc}
\usepackage[english]{babel}
\usepackage{amsmath,amssymb,bm,graphicx,microtype}
\usepackage[hidelinks]{hyperref}

\newtheorem{theorem}{Theorem}
\newtheorem{lemma}[theorem]{Lemma}
\newtheorem{corollary}[theorem]{Corollary}
\newcommand{\F}{\mathbb F}
\newcommand{\AME}{\operatorname{AME}}
\newcommand{\Norm}{\operatorname{Norm}}
\newcommand{\Tr}{\operatorname{Tr}}
\newcommand{\tr}{\operatorname{tr}}
\newcommand{\ket}[1]{\lvert #1\rangle}
\newcommand{\bra}[1]{\langle #1\rvert}

\begin{document}

\title{\texorpdfstring{Absolutely Maximally Entangled States of $2q$ Parties\\
in Every Odd Prime-Power Dimension $q$}{Absolutely Maximally Entangled States of 2q Parties in Every Odd Prime-Power Dimension q}}

\author{Mykhailo Hontarenko}
\email{mykhailo.hontarenko@doctoral.uj.edu.pl}
\affiliation{Institute of Theoretical Physics, Faculty of Physics, Astronomy and Applied Computer Science,
Jagiellonian University, ul.~\L{}ojasiewicza 11, 30-348 Krak\'ow, Poland}
\affiliation{Doctoral School of Exact and Natural Sciences, Jagiellonian University,
ul.~\L{}ojasiewicza 11, 30-348 Krak\'ow, Poland}

\author{Karol \.{Z}yczkowski}
\email{karol.zyczkowski@uj.edu.pl}
\affiliation{Institute of Theoretical Physics, Faculty of Physics, Astronomy and Applied Computer Science,
Jagiellonian University, ul.~\L{}ojasiewicza 11, 30-348 Krak\'ow, Poland}
\affiliation{Center for Theoretical Physics, Polish Academy of Sciences,
al.~Lotnik\'ow 32/46, 02-668 Warszawa, Poland}

\date{September 10, 2026}

\begin{abstract}
Absolutely maximally entangled (AME) states represent an extreme form of
multipartite entanglement: every reduced system containing at most half
of the parties is maximally mixed.
These states provide perfect tensors and optimal quantum error-correcting
codes, yet their existence is known only in restricted parameter regimes.
For every odd prime power $q=p^e\ge3$, we construct a stabilizer
$\AME(2q,q)$ state whose normalized one-party projection yields
a stabilizer $\AME(2q-1,q)$ state.
A closed-form $q\times q$ bordered-circulant matrix $A_q$ over
$\F_{q^2}$ generates a Hermitian self-dual maximum distance separable
(MDS) code $[2q,q,q+1]_{q^2}$, which lies outside the extended Reed--Solomon classes.
In suitable bases, the amplitude tensors define normalized
$q$-unitary complex Hadamard matrices of order $q^q$ with $p$th-root phases.
Additional constructions yield $\AME(q+3,q)$ states and families
at intermediate particle numbers through explicit rescalings of selected
submatrices.
We also provide nine explicit parent matrices and the corresponding
one-party projections.
\end{abstract}

\maketitle

Entanglement of a bipartite pure state is completely described, up to local
unitary transformations, by its Schmidt coefficients \cite{Nielsen1999}.  For three or more
parties the situation changes: maximal entanglement depends on how the
system is divided \cite{Review2026}.  A natural extreme is reached when the state is maximally
entangled across every possible bipartition.  Such states are called absolutely maximally entangled \cite{Helwig2012}.

A pure state $\ket{\Psi}\in(\mathbb C^d)^{\otimes N}$ is called
$\AME(N,d)$ if, for every subset $S$ of parties with complement $S^c$, its reduced density matrix satisfies
\begin{equation}
\rho_S=\operatorname{tr}_{S^c}\ket{\Psi}\!\bra{\Psi}
 =\frac{I_S}{d^{|S|}},
 \qquad |S|\le\lfloor N/2\rfloor .
 \label{eq:ame}
\end{equation}
Here $I_S$ is the identity on the parties in $S$, and $|S|$ is their number.
Thus any group containing at most half of the parties sees a completely
random state.  For $N=2m$, the same tensor is proportional to an isometry
from any $m$ inputs to the remaining $m$ outputs and is therefore a perfect
tensor.  AME states connect multipartite entanglement with optimal quantum
error correction, quantum secret sharing, parallel teleportation,
multiunitary matrices, and holographic tensor networks
\cite{Scott2004,Helwig2012,Goyeneche2015,Pastawski2015,Review2026}.

For prime-power $d$, Reed--Solomon codes give minimal-support states in the
standard range $N\le d+1$ \cite{Raissi2018}.
Minimal support means $d^{\lfloor N/2\rfloor}$ nonzero amplitudes
in a product basis. Quantum-code constructions
provide further families and connect the problem with finite geometry
\cite{GrasslRotteler2015,BallCentellesHuber2023}, while many parameter pairs
remain unresolved \cite{AMETable}.  The seven-party case is now completely
settled: an $\AME(7,d)$ state exists if and only if $d\ge3$
\cite{Shi2026}. Recently, five isolated AME parameters were obtained
from three explicit Hermitian self-dual maximum distance separable (MDS) codes,
using exact minor checks and one-party projection \cite{BevinsBidav2026}.
The difficulty in extending this route to an infinite family is to
meet two requirements at once: Hermitian self-duality of the code and
nonvanishing of every relevant minor.  We give a matrix for which both
follow analytically, obtaining $N=2q$ for every odd prime power $q$.

The resulting even- and odd-party AME states have parameters
\begin{equation}
 (N,d)=(2q,q),\qquad (N,d)=(2q-1,q),
 \label{eq:families}
\end{equation}
where $q=p^e$, $p$ is an odd prime, and $e\ge1$.  Thus the result includes non-prime
dimensions such as $9$, $25$, and $27$.  Table~\ref{tab:finite-atlas}
collects nine explicit even-party parents and the odd-party states obtained
from them.  Four parents have $N=2q$; three others have $N=q+3$ and
are covered by the supplementary construction.  Only $\AME(16,9)$ and
$\AME(20,13)$ lie outside the general families given here.
Their complete specifications,
finite-field conventions, and exact superregularity checks are given in the
Supplemental Material (SM) \cite{SM}.
Figure~\ref{fig:ame-map} places these constructions in the known AME
existence table.
\begin{figure*}[tp]
 \centering
 \includegraphics[width=\textwidth]{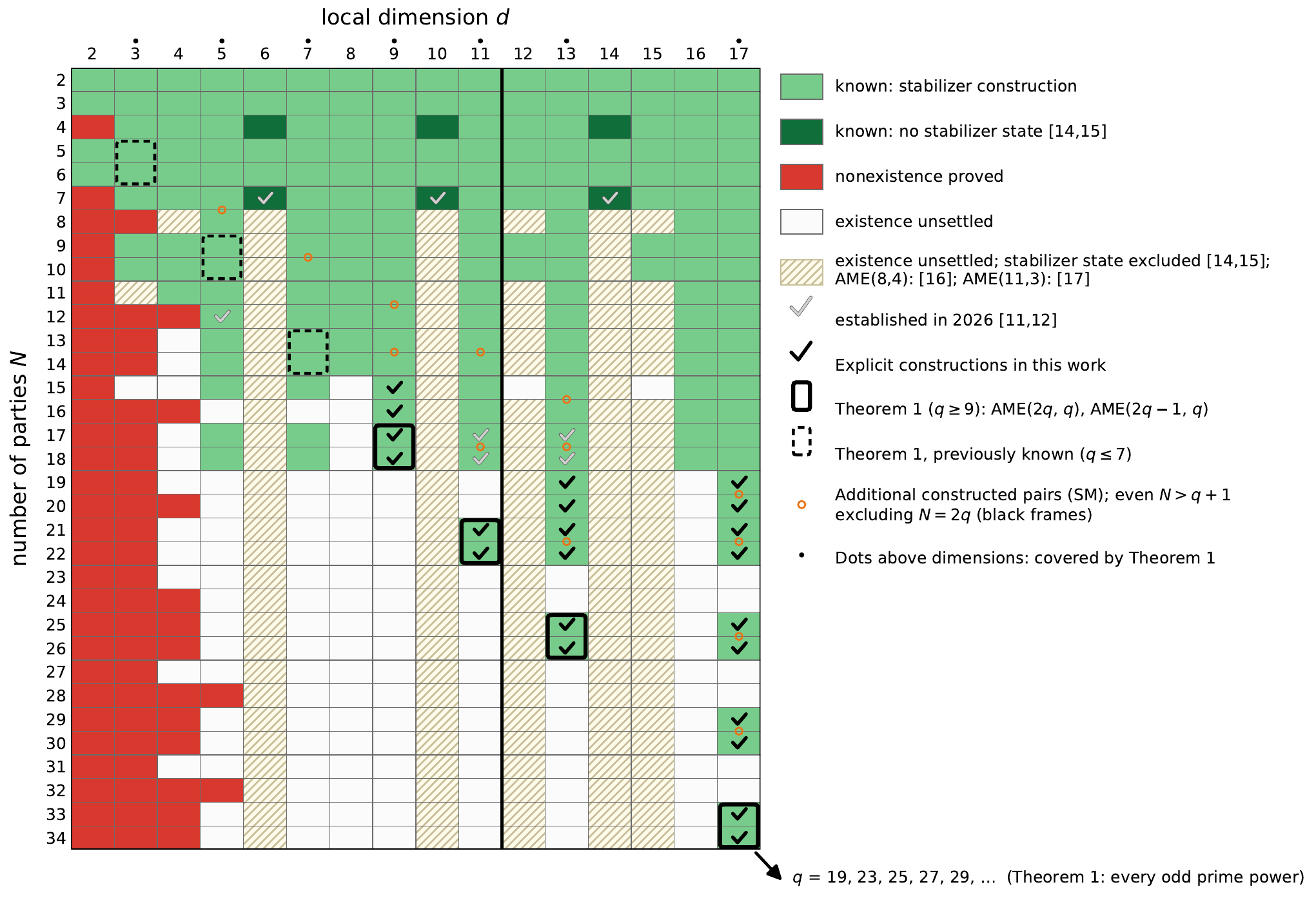}
 \caption{AME existence status for $2\le d\le17$ and $2\le N\le34$
\cite{AMETable,Raissi2018,Review2026,GrasslRotteler2015,
 Shi2026,BevinsBidav2026}. Black frames mark Theorem~\ref{thm:main}
 and its projection (dashed for previously known $q\le7$); dots mark the
 covered local dimensions. Black checks mark 14 points from
 Table~\ref{tab:finite-atlas} and eight from the supplementary families;
 four table entries lie outside the plot. Gray checks mark other 2026
 results \cite{Shi2026,BevinsBidav2026}; orange rings mark supplementary
 pairs beyond the short-length range. Green: known stabilizer states;
 dark green: known AME states without a stabilizer realization;
 yellow hatching: unsettled existence with a stabilizer obstruction
 \cite{Cha2026,Wojcik2026,BallMorenoSimoens2025,Danielsen2012}.
 Red: nonexistence; white: unsettled existence.
 See also Ref.~\cite{BallSimoens2026} on AME$(4,6)$.
 Cell-specific sources and the separator are
 explained in the SM \cite{SM}.}
 \label{fig:ame-map}
\end{figure*}

Bordered-circulant self-dual MDS matrices have previously served as a
starting point for computational searches
\cite{GrasslGulliver2008,GrasslGulliver2009}, and
Cauchy and bordered-Cauchy matrices are classical routes to MDS codes
\cite{RothSeroussi1985,RothLempel1989}.
Our matrix retains the circulant structure, but its entries are
fixed by a rational formula rather than a search.  Its finite-field
Fourier coefficients give Hermitian orthogonality; a root-counting
argument then rules out singular minors.

\smallskip
\noindent\textit{The coding criterion.---}
For stabilizer states, the question has a useful coding-theoretic form \cite{Ketkar2006}.
We work over the field $\F_{q^2}$ with $q^2$ elements and write
\begin{equation}
 \bar z=z^q,\qquad \Norm(z)=z\bar z\in\F_q .
 \label{eq:norm}
\end{equation}
Here $\operatorname{Norm}$ denotes the field norm; $N$ remains the number
of parties.
On matrices the bar acts entrywise, so
$\overline{A_q}^{\,\mathsf T}$ is the Hermitian transpose over
$\F_{q^2}/\F_q$.
Let $I_q$ denote the $q\times q$ identity matrix.
It is enough to find a $q\times q$ matrix $A_q$ over $\F_{q^2}$ such that
\begin{align}
 A_q\overline{A_q}^{\,\mathsf T}&=-I_q,\nonumber\\
 \det A_q[R,C]&\ne0
 \quad\bigl(0<|R|=|C|\le q\bigr).
 \label{eq:matrix-goal}
\end{align}
The second condition is called superregularity \cite{Keri2006}.
This is the systematic-matrix form of the MDS criterion
\cite[Theorem~2.4.3]{HuffmanPless2003}. It holds for every row set $R$ and column set $C$, and
$A_q[R,C]$ denotes the corresponding submatrix. Indeed, set
\begin{equation}
 G_q=[\,I_q\mid A_q\,],\qquad
 \mathcal C_q=\{\bm uG_q:\bm u\in\F_{q^2}^{\,q}\}
 \subseteq\F_{q^2}^{\,2q}.
 \label{eq:classical-code}
\end{equation}
The first condition in Eq.~\eqref{eq:matrix-goal} makes $\mathcal C_q$
Hermitian self-orthogonal.  Since $G_q$ has rank $q$, the code has half the
ambient dimension and is therefore Hermitian self-dual.
For completeness, choose any $q$
columns of $G_q$: $q-s$ from $I_q$, with labels $J$, and $s$ from $A_q$,
with labels $C$.  Expanding the determinant along the identity columns
leaves $\pm\det A_q[R,C]$, where $R$ consists of the $s$ rows outside
$J$.  This is nonzero by superregularity when $s>0$.
If $s=0$, all columns come from the identity and the determinant is $\pm1$.  Thus any $q$ columns of $G_q$ are independent.  A nonzero codeword $\bm uG_q$ cannot have $q$ zero coordinates: those coordinates would give an invertible linear
system forcing $\bm u=0$.  Its Hamming weight is therefore at least $q+1$.  The classical Singleton bound \cite{Singleton1964} gives the reverse inequality, so the distance is exactly $q+1$.

We write $[n,k,d]_{q^2}$ for a classical linear code over
$\F_{q^2}$ of length $n$, dimension $k$, and minimum Hamming
distance $d$. The quantum notation $[[n,k,d]]_q$ describes
a code on $n$ $q$-level systems, with code-space dimension
$q^k$ and distance $d$; for $k=0$, this space is one-dimensional
and specifies a single quantum state.
Our construction follows the chain
\begin{equation*}
 A_q
 \longrightarrow [2q,q,q+1]_{q^2}
 \longrightarrow [[2q,0,q+1]]_q
 \longrightarrow \AME(2q,q).
\end{equation*}

\begin{table}[tbp]
\caption{Explicit parent states and their normalized one-party projections.
An asterisk marks a parent with $N=2q$; the three parents with
$N=q+3$ are also covered by the supplementary construction.}
\label{tab:finite-atlas}
\begin{ruledtabular}
\begin{tabular}{ccc}
$q$ & parent & projection \\
\colrule
9  & $\AME(16,9)$      & $\AME(15,9)$ \\
9  & $\AME(18,9)^{*}$  & $\AME(17,9)$ \\
11 & $\AME(22,11)^{*}$ & $\AME(21,11)$ \\
13 & $\AME(20,13)$     & $\AME(19,13)$ \\
13 & $\AME(26,13)^{*}$ & $\AME(25,13)$ \\
17 & $\AME(20,17)$     & $\AME(19,17)$ \\
17 & $\AME(34,17)^{*}$ & $\AME(33,17)$ \\
25 & $\AME(28,25)$     & $\AME(27,25)$ \\
29 & $\AME(32,29)$     & $\AME(31,29)$ \\
\end{tabular}
\end{ruledtabular}
\end{table}

The classical code has length $2q$, dimension $q$ over $\F_{q^2}$,
and minimum distance $q+1$. The Hermitian stabilizer construction
gives the pure quantum code $[[2q,0,q+1]]_q$
\cite{Rains1999,Ketkar2006}.
In the pure-state convention, its distance equals the minimum
nonzero Hamming weight of the Hermitian self-dual classical code,
namely $q+1$.
The resulting code
saturates the quantum Singleton bound \cite{Singleton1964,GrasslHuberWinter2022}.
The comparison with generalized Reed--Solomon (GRS) codes depends
on Hermitian self-orthogonality.
Grassl and R\"otteler constructed quantum MDS (QMDS) codes as long as $q^2+1$
\cite{GrasslRotteler2015}; Ball later identified their underlying cyclic
and constacyclic codes as GRS codes \cite{Ball2023GRS}.  Ball and Vilar
proved the length restriction conjectured in Ref.~\cite{GrasslRotteler2015}.
Theorems~2.1 and 3.1 of Ref.~\cite{BallVilar2022}, specialized to
dimension $q$, imply that a Hermitian self-orthogonal GRS
code of dimension $q$ over $\F_{q^2}$ must have length $q^2+1$; their
definition includes extended GRS codes.  Since $2q<q^2+1$, our codes
belong to neither class.
To the best of our knowledge,
the standard QMDS families and the known classical constructions of
Hermitian self-dual MDS codes contain no uniform family with parameters
$[2q,q,q+1]_{q^2}$ for every odd prime power $q$
\cite{HuberGrassl2020,KimLee2004,GulliverKimLee2008,BallCentellesHuber2023,
ZhuWan2025,ZhaoMa2026}.

\smallskip
\noindent\textit{Explicit construction.---}
Choose a primitive element $g\in\F_{q^2}^\times$ and put
$\gamma=g^{q+1}$.  Then $\gamma$ generates $\F_q^\times$, is a nonsquare in $\F_q$, and
$\Norm(g)=\gamma$.  Choose also $b\in\F_{q^2}^\times$ with $\Norm(b)=2$; it
exists because the norm takes every nonzero value in $\F_q$.
Primitivity is a convenient choice rather than an essential restriction.
The SM provides a more general parametrization and describes
the admissible border scalings \cite{SM}.

Label the rows of $A_q$ by $\{0\}\cup\F_q^\times$ and the columns by
$\{\infty\}\cup\F_q^\times$, where $\infty$ is a formal label for the
border column.  We write $[A_q]_{x,y}$ for a matrix entry.
For $x,y\in\F_q^\times$, define
\begin{equation}
 \begin{aligned}\relax
 [A_q]_{0,\infty}&=-b/{\bar b},&
 [A_q]_{0,y}&=[A_q]_{x,\infty}=b,\\
 [A_q]_{x,y}&=\frac{2y(y+gx)}{\gamma x^2-y^2}.
 \end{aligned}
 \label{eq:Aq-explicit}
\end{equation}
This formula has no zero denominator, since $\gamma$ is a nonsquare, and no
zero numerator, since $g\notin\F_q$.

For example, let $q=3$ and $\F_9=\F_3[\theta]$.
Here $\gamma=-1$, and choosing $\theta^2=\gamma$ gives
$\theta^2=-1$. The symbol $\theta$ denotes a finite-field
element, not the complex imaginary unit.
Taking $g=b=1+\theta$ gives
\begin{equation}
 A_3=
 \begin{pmatrix}
  -\theta&1+\theta&1+\theta\\
  1+\theta&1-\theta&\theta\\
  1+\theta&\theta&1-\theta
 \end{pmatrix},
 \label{eq:A3}
\end{equation}

in the row order $(0,1,-1)$ and column order $(\infty,1,-1)$.  Thus
$[\,I_3\mid A_3\,]$ gives the explicit chain
$[6,3,4]_9\to[[6,0,4]]_3\to\AME(6,3)$.

\begin{theorem}\label{thm:main}
Let $q=p^e$ be an odd prime power, let $g$ generate
$\F_{q^2}^{\times}$, and let $\Norm(b)=2$.  The matrix
in Eq.~\eqref{eq:Aq-explicit} satisfies Eq.~\eqref{eq:matrix-goal}.
Consequently, a stabilizer $\AME(2q,q)$ state exists.
\end{theorem}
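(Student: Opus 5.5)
The plan is to verify the two conditions of Eq.~\eqref{eq:matrix-goal} separately. The chain $A_q\to[2q,q,q+1]_{q^2}\to[[2q,0,q+1]]_q\to\AME(2q,q)$ described above then gives the stabilizer state. The first observation is that for $x,y\in\F_q^\times$ the interior entry depends only on $t=y/x$:
\begin{equation*}
 [A_q]_{x,y}=f(y/x),\qquad f(t)=\frac{2t(t+g)}{\gamma-t^2}.
\end{equation*}
So the $(q-1)\times(q-1)$ core is a group-circulant over the cyclic group $\F_q^\times$, bordered by the constant row and column $b$ and the corner $-b/\bar b$.

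\emph{Hermitian condition.} I would compute the entries of $A_q\overline{A_q}^{\,\mathsf T}$ in three cases.
\begin{itemize}
\item Row $0$ with itself: this gives $\Norm(b/\bar b)+(q-1)\Norm(b)=1+2(q-1)\equiv-1$.
\item Row $0$ with row $x$: this reduces to $-b\,\bar b/b+b\sum_{t\in\F_q^\times}\overline{f(t)}$, so it requires $\sum_t \bar f(t)=1$ (the bar commutes with the sum).
\item Rows $x,x'$: this reduces to $2+\sum_{t}f(t)\overline{f(ts)}$ with $s=x/x'$. This must equal $-1$ for $s=1$ and $0$ for $s\ne1$.
\end{itemize}
To evaluate these sums I would expand $f$ in partial fractions over $\F_{q^2}$. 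Write $\gamma=\omega^2$ with $\omega=g^{(q+1)/2}$. Since $\gamma$ is a nonsquare in $\F_q$, we have $\omega\notin\F_q$ and $\bar\omega=-\omega$. Then
\begin{equation*}
 f(t)=-2+\frac{\alpha}{\omega-t}+\frac{\beta}{\omega+t}
\end{equation*}
for explicit $\alpha,\beta$ depending on $g,\omega$. The products $f(t)\bar f(ts)$ expand into simple fractions, using $\bar\omega=-\omega$ and $t\in\F_q$. I would sum them with the identity
\begin{equation*}
 \sum_{t\in\F_q}\frac1{a-t}=\frac{-1}{a^q-a}\qquad(a\notin\F_q),
\end{equation*}
together with its companion for products of two simple poles, obtained by partial fractions again. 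Equivalently, one can diagonalize the circulant with multiplicative characters of $\F_q^\times$ and check that every eigenvalue has the right norm. Because $g$ is primitive, $g^q=\gamma/g$, and this should make the cross terms cancel.

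\emph{Superregularity.} This is the main obstacle. First I reduce the range of minor sizes. Since $A_q\overline{A_q}^{\,\mathsf T}=-I_q$, Jacobi's complementary-minor identity gives
\begin{equation*}
 \det A_q[R,C]=\pm\det A_q\cdot\overline{\det A_q[C^c,R^c]}.
\end{equation*}
Hence it suffices to treat $|R|\le(q-1)/2$. Minors that use the border row or column can be handled by subtracting the border row or column from the others. This turns them into minors of the modified rational kernels
\begin{equation*}
 f(y/x)-b\quad\text{and}\quad f(y/x)-f(y'/x)\text{-type},
\end{equation*}
which are again of low-degree rational form.

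For a core minor with rows $x_1,\dots,x_s$ and columns $y_1,\dots,y_s$, the partial-fraction form exhibits $A_q[R,C]$ as
\begin{equation*}
 -2J+\alpha\,\mathcal C_\omega+\beta\,\mathcal C_{-\omega},
\end{equation*}
where $J$ is the all-ones matrix and $\mathcal C_{\pm\omega}=\bigl(x_i/(\omega x_i\mp y_j)\bigr)_{i,j}$ are Cauchy matrices. I would then fix all but one row label $x_1$ and clear denominators. The minor becomes a polynomial $P(x_1)$ whose degree is bounded by roughly $2s$. It vanishes identically only if a smaller minor vanishes, which gives an induction on $s$.

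The aim is to show that any root $x_1$ of $P$ in $\F_q$ forces $P$ to have more than $\deg P$ roots. One route uses the Frobenius symmetry $\omega\mapsto-\omega$, which swaps the two Cauchy pieces. Another uses the fact that the roots come in orbits under $x\mapsto$ the values produced by the circulant symmetry. Either way the conclusion would be that $P\equiv0$, contradicting the induction hypothesis.

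I expect the delicate part to be exactly this root count. The degree bound must beat the number of forced roots for every $s\le(q-1)/2$, and the border cases must fit the same template. If the direct count fails, my fallback is to find a factorization of $f(t)$ as a norm-type quotient $\lambda(t)/\bar\lambda(t)$ times a constant. That would make the core a twisted Cauchy matrix, whose minors are nonzero by the classical product formula, and only the border minors would remain.
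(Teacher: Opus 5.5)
Your plan for the Hermitian condition follows the paper's route. Diagonalizing the core with the characters $t\mapsto t^k$ of $\F_q^\times$ is exactly its finite-field Fourier argument, though you leave the computation undone. The mode-$k$ eigenvalue of $C\overline C^{\,\mathsf T}$ is $\hat f(k)\overline{\hat f(-k)}$, and a geometric series gives $\hat f(2j)=\gamma^j$ and $\hat f(2j+1)=g\gamma^j$. There are two small slips. The row-$0$/row-$x$ entry is $-b+b\sum_t\overline{f(t)}$. Jacobi's identity should involve $\overline{\det A_q[R^c,C^c]}$.

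The genuine gap is superregularity, which is the substance of the theorem. Here you offer only strategies, and the decisive step is missing in each.
\begin{itemize}
\item Treating a core minor as a polynomial $P(x_1)$ in one row label gives a polynomial over $\F_{q^2}$ of degree about $2s$. Nothing you describe links one root $x_1\in\F_q^\times$ to further roots. The claim that a root ``forces more than $\deg P$ roots'' is the whole difficulty. It cannot come from degree counting alone, because it must use that $\gamma$ is a nonsquare, that $g\notin\F_q$, and the norm structure of $\F_{q^2}/\F_q$.
\item The border minors are left to an unspecified subtraction.
\item Your fallback cannot work. A factorization $f=c\,\lambda/\bar\lambda$ would make $\Norm(f(t))=\Norm(c)$ constant on $\F_q^\times$. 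In fact $\Norm(f(t))-\Norm(f(-t))=8(g+\bar g)t^3/(\gamma-t^2)^2\neq0$, since primitivity of $g$ gives $g+\bar g\neq0$.
\item In any case $f$ has two poles plus a constant, so your core $-2J+\alpha\mathcal C_\omega+\beta\mathcal C_{-\omega}$ has no Cauchy product formula.
\end{itemize}

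The missing idea is to \emph{dualize}. Take a null vector of the minor and encode it as a rational function $\varphi=f/d_C$, with $d_C=\prod_{y\in C\setminus\{\infty\}}(u-y)$ and $\deg f\le s-1$. This is an element of the span of the $1/(u-y)$ for $y\in C$, with the border column contributing a constant.

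Set $\theta^2=\gamma$ and $\eta_\pm=(g\pm\theta)/(2\theta)$. Then $[A_q]_{x,y}=2y\bigl(\eta_+/(\theta x-y)+\eta_-/(\theta x+y)\bigr)$. So, up to invertible diagonal scalings, row $x$ acts as the functional $\eta_+\varphi(\theta x)-\eta_-\varphi(-\theta x)$, and the border row acts as $\varphi(0)$.
\begin{itemize}
\item A selected pair $\{x,-x\}$ forces $f(\pm\theta x)=0$, because $\eta_+^2-\eta_-^2=g/\theta\neq0$.
\item A lone row forces $f(\theta x)=\xi f(-\theta x)$ with $\Norm(\xi)=-1$.
\end{itemize}
Write $f=P(u^2)+uQ(u^2)$, divide out the paired roots, and take norms. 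Each lone row then becomes a root $\gamma x^2$ of the $\F_q$-polynomial $P_1\overline{P_1}-U\,Q_1\overline{Q_1}$. Its degree is smaller than the number of lone rows, so it vanishes identically. Its two terms have degrees of opposite parity, so both vanish and $f=0$. The case with no lone rows is settled by divisibility, and the border row only adds $P_1(0)=0$, which is handled the same way.

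The roots come from all selected rows at once, while the degree is controlled by the selected columns. A one-row-at-a-time determinant expansion does not see this balance.
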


\smallskip
A computational-basis measurement of any one party gives the
stabilizer $\AME(2q-1,q)$ family by the standard projection property
\cite{Helwig2012}; the conditional states are written explicitly below.

The parameter values $q=3,5,7$ on $N=2q$ were known previously
\cite{GrasslGulliver2008,GrasslGulliver2009,GrasslRotteler2015}; in the
Huber--Wyderka table's snapshot, $q=9$ is the first unsettled member of the family
\cite{AMETable}.  Among the four $N=2q$
parents in Table~\ref{tab:finite-atlas}, only the matrix for $\AME(34,17)$ is chosen to be $A_{17}$; the others are separately
specified representatives.

\noindent\textit{Why the construction works.---}
The circulant block separates the Hermitian condition into scalar
conditions on Fourier modes. Order the nonzero labels by powers of
$\gamma$, so the lower-right $(q-1)\times(q-1)$ block is circulant.
Write its first row as
$(\kappa_0,\ldots,\kappa_{q-2})$ and set
$F_k:=\sum_{\ell=0}^{q-2}\kappa_\ell\gamma^{k\ell}\in\F_{q^2}$, with indices modulo $q-1$.  This is the discrete Fourier transformation over $\F_{q^2}$,
using the $(q-1)$st roots of unity in $\F_q$.
A geometric-series calculation gives its eigenvalues
\begin{equation}
 F_{2j}=\gamma^j,\qquad F_{2j+1}=g\gamma^j,\qquad
 0\le j\le\frac{q-3}{2}.
 \label{eq:spectrum}
\end{equation}

For nonzero modes, $F_k\overline{F_{-k}}=-1$. The constant border couples
only to the zero mode, where $F_0=1$. Its two-dimensional contribution
is fixed by $\operatorname{Norm}(b)=2$ and the corner $-b/{\bar b}$;
together these give $A_q\overline{A_q}^{\,\mathsf T}=-I_q$.

Superregularity rests on a different feature of the same formula.
Choose $\theta^2=\gamma$ in $\F_{q^2}$, so $\bar\theta=-\theta$.
Partial fractions express each nonborder row as a weighted combination
of evaluations at the conjugate points $\theta x$ and $-\theta x$.
A singular $s\times s$ minor would give a nonzero rational function
$f(u)/d(u)$ satisfying its row equations, with $\deg f\le s-1$.
The selected columns determine $d$; a border column contributes a
constant function.

Nonborder rows occur in pairs $\{x,-x\}$. Selecting both forces
$f(\theta x)=f(-\theta x)=0$; selecting one gives a relation between
these values with a multiplier of norm $-1$. Remove the factors
$u^2-\gamma x^2$ from complete pairs and write the remaining polynomial
as $R(u^2)+uS(u^2)$. When the border row is absent, the norm relations
force
\begin{equation}
 h(v)=R(v)\overline{R(v)}-vS(v)\overline{S(v)}\in\F_q[v]
\end{equation}
to vanish at $v=\gamma x^2$ for every unpaired row. Here the bars
conjugate coefficients. If there are $t>0$ such rows, these are $t$
distinct roots of a polynomial of degree at most $t-1$, hence $h=0$.
For nonzero $R$ and $S$, the two summands have different degree parity
and cannot cancel; if either vanishes, so must the other. Thus $R=S=0$,
a contradiction. With no unpaired rows, the pair roots and any border
root already outnumber $\deg f$. A selected border row supplies
$f(0)=0$. For $t>0$, factoring the corresponding zero from $R$ gives
an analogous norm polynomial of degree at most $t-1$, and the same
root-counting argument applies. The SM gives the partial fractions
and the degree count in each case \cite{SM}.

\smallskip
\noindent\textit{The quantum state.---}
The code $\mathcal C_q$ fixes the AME state through its Weyl stabilizers.
Let $p$ be the characteristic of $\F_q$ and use the same
$\theta\in\F_{q^2}^\times$ as above.
Write each coordinate of $\bm c\in\mathcal C_q$ uniquely as
$c_j=x_j+\theta z_j$, with
$x_j,z_j\in\F_q$.  On the computational basis
$\{\ket t:t\in\F_q\}$, for $x,z\in\F_q$, define
\begin{align}
 X(x)\ket t&=\ket{t+x},\nonumber\\
 Z(z)\ket t&=
 \exp\!\left[\frac{2\pi i}{p}
 \Tr_{\F_q/\F_p}(zt)\right]\ket t,\nonumber\\
 D(x,z)&=
 \exp\!\left[\frac{2\pi i}{p}
 \Tr_{\F_q/\F_p}\!\left(\frac{xz}{2}\right)\right]X(x)Z(z).
 \label{eq:weyl}
\end{align}
Here $\Tr_{\F_q/\F_p}(a)=\sum_{\ell=0}^{e-1}a^{p^\ell}$ is
the field trace, whose value is read modulo $p$ in the exponential;
$1/2$ denotes the inverse of $2$ in $\F_q$.
Set $D(\bm c)=\bigotimes_{j=1}^{2q}D(x_j,z_j)$.  For
$\bm c,\bm c'\in\mathcal C_q$, the coefficient of $\theta$ in
$\sum_jc_j\bar c'_j=0$ is
$\sum_j(z_jx'_j-x_jz'_j)=0$, the symplectic product.  Thus Hermitian
self-duality makes the operators commute, while the phase in
Eq.~\eqref{eq:weyl} gives
$D(\bm c)D(\bm c')=D(\bm c+\bm c')$ on $\mathcal C_q$.
Averaging this abelian group of $q^{2q}$ distinct operators gives a
projector. Only the identity has nonzero trace, so the projector has
trace one and rank one:
\begin{equation}
 \ket{\Psi_{2q}}\!\bra{\Psi_{2q}}
 =\frac{1}{q^{2q}}\sum_{\bm c\in\mathcal C_q}D(\bm c).
 \label{eq:explicit-projector}
\end{equation}
The distance $q+1$ means that every nonidentity term acts on at least
$q+1$ sites.  Therefore, after tracing down to any set $S$ with
$|S|\le q$, only the identity survives, and
$\rho_S=I_S/q^{|S|}$.  This proves $\AME(2q,q)$
\cite{HuberGrassl2020}.

The first party has reduced state $I_q/q$, so separating it gives
\cite{Helwig2012}
\begin{equation}
 \ket{\Psi_{2q}}
 =\frac{1}{\sqrt q}\sum_{r\in\F_q}
 \ket{r}\otimes\ket{\Psi_{2q-1}^{(r)}},
 \label{eq:projection}
\end{equation}
where the states $\ket{\Psi_{2q-1}^{(r)}}$ are normalized and mutually
orthogonal. Measuring the first party in the computational basis gives
each outcome $r$ with probability $1/q$ and leaves the remaining parties
in $\ket{\Psi_{2q-1}^{(r)}}$.

Let $T$ contain at most $q-1$ remaining
parties. The original state satisfies
$\rho_{\{1\}\cup T}=(I_q\otimes I_T)/q^{|T|+1}$.
Its diagonal block corresponding to outcome $r$ is the unnormalized
conditional state $I_T/q^{|T|+1}$. Dividing by the outcome probability
$1/q$ gives the normalized reduced state
\begin{equation}
 \rho_T^{(r)}
 =q\,\frac{I_T}{q^{|T|+1}}
 =\frac{I_T}{q^{|T|}}.
\end{equation}
Thus every outcome yields an $\AME(2q-1,q)$ state.
Computational-basis measurement preserves the stabilizer structure,
so these states are also stabilizer states.

\smallskip
\noindent\textit{Additional families.---}
A square submatrix $P$ inherits superregularity, but generally not
Hermitian orthogonality.  For selected $m\times m$ submatrices we find
explicit nonsingular diagonal matrices $W,H$ over $\F_q$ such that
$PW\bar P^{\mathsf T}=H$.  The norm property above supplies
diagonal matrices $L,D$ for which $B=LPD$ satisfies
$B\bar B^{\mathsf T}=-I_m$.  The same stabilizer-projector construction
with generator $[I_m\mid B]$ then gives $\AME(2m,q)$.
Sign-symmetric submatrices of $A_q$ yield
\begin{equation}
 \AME(4k+2,q),\quad \AME(4k+1,q),
 \qquad 1\le k\le\frac{q-1}{2},
 \label{eq:split-descendants}
\end{equation}
where the case $k=(q-1)/2$ recovers Theorem~\ref{thm:main}.  A second construction, indexed
by a subgroup of $\{z\in\F_{q^2}:\Norm(z)=1\}$, gives
$\AME(q+3,q)$ for every odd prime power and a further descendant family
when $q\equiv1\pmod4$.  All even parents admit computational-basis
one-party projections.  Complete formulas and proofs are given in the
SM \cite{SM}; Fig.~\ref{fig:ame-map} shows the
additional pairs beyond the standard short-length range.

\smallskip
\noindent\textit{Multiunitary matrices and perfect tensors.---}
The rescaled amplitudes of an $\AME(2q,q)$ state form a matrix $U_q$ of
order $q^q$, unitary under every balanced reshaping of its indices
\cite{Goyeneche2015}. In suitable local bases our stabilizer states have
equal-modulus amplitudes with $p$th-root phases. They therefore give an infinite family of complex Hadamard matrices of Butson type that are $q$-unitary after normalization
\cite{Bruzda2024Multi}:
\begin{equation}
 H_q=q^{q/2}\widetilde U_q\in BH(q^q,p),
 \qquad q=p^e\text{ odd}.
 \label{eq:butson-family}
\end{equation}
Here $|[H_q]_{ab}|=1$ and $H_qH_q^\dagger=q^qI$; the tilde denotes
the local change of basis. The explicit
$27\times27$ example needs no such change: $H_3=\sqrt{27}\,U_3\in BH(27,3)$,
with $U_3$ already $3$-unitary. The formulas and the general argument are
given in the SM \cite{SM}.

These representatives are superpositions of all $q^{2q}$ product-basis
states with prescribed phases. The distinction between product and
entangled building blocks also arises for $\AME(4,6)$: its quantum
officers cannot all be product states \cite{BallSimoens2026}.
For our $\AME(2q,q)$ family, no local product basis gives minimal support:
the necessary condition $d\ge\lceil N/2\rceil+1$ would require
$q\ge q+1$ \cite{Goyeneche2015,Bernal2017}.
Their multiunitary matrices are therefore not equivalent to permutation
matrices under single-party unitaries.

For $q=2^r\ge8$, the characteristic-two analogue of the rational kernel
with two conjugate simple poles cannot satisfy
$A_q\overline{A_q}^{\,\mathsf T}=-I_q$ \cite{SM}.

As perfect tensors, these states define, up to normalization, isometries
from any set of at most half their indices to the rest. For equal local
dimensions this is equivalent to the AME condition. Such tensors serve
as building blocks in holographic quantum error-correcting networks
\cite{Pastawski2015,Bistron:2024rbh}.
At fixed local dimension $q$, the present tensors have $2q$ indices,
compared with at most $q+1$ in the standard minimal-support Reed--Solomon
construction over $\F_q$ \cite{Raissi2018,Heydeman2018}.
Thus their number of indices and local dimension can grow together,
while Eqs.~\eqref{eq:Aq-explicit} and \eqref{eq:explicit-projector}
continue to specify the state.

\begin{acknowledgments}
Claude Fable 5.1 (Anthropic) and OpenAI Codex (GPT-6 Astra) assisted
with finite-field exploration, the development of the explicit matrix
formula and its proof, numerical verification, literature searches, and
manuscript editing.  The authors guided this work through problem-specific
prompts, reviewed the resulting formulas and arguments, and retain full
responsibility for all the  claims.

We thank Markus Grassl and Felix Huber for helpful comments.
M.H. acknowledges support from the Research Support Module within the
Excellence Initiative--Research University program at Jagiellonian
University in Krak\'ow.  We acknowledge financial support from the
European Union through ERC Advanced Grant \emph{TAtypic}, Project
No.~101142236.

\end{acknowledgments}

\section*{Data Availability}
The explicit formulas, finite-field conventions, and matrix entries
needed to reconstruct the AME states are provided in the Supplemental Material.

\bibliography{ame}

\onecolumngrid
\clearpage
\raggedbottom
\setcounter{equation}{0}
\setcounter{section}{0}
\setcounter{table}{0}
\setcounter{theorem}{0}
\setcounter{secnumdepth}{1}
\renewcommand{\theequation}{S\arabic{equation}}
\renewcommand{\thesection}{S\arabic{section}}
\renewcommand{\thetable}{S\arabic{table}}
\renewcommand{\thetheorem}{S\arabic{theorem}}
\renewcommand{\thelemma}{\thetheorem}
\renewcommand{\thecorollary}{\thetheorem}
\renewcommand{\theHequation}{S\arabic{equation}}
\renewcommand{\theHsection}{S\arabic{section}}
\renewcommand{\theHtable}{S\arabic{table}}
\providecommand{\theHtheorem}{\arabic{theorem}}
\renewcommand{\theHtheorem}{S\arabic{theorem}}
\pdfbookmark[0]{Supplemental Material}{supplement}

\begin{center}
{\large\bfseries Supplemental Material\par}
\medskip
{\bfseries Absolutely Maximally Entangled States of $2q$ Parties\\
in Every Odd Prime-Power Dimension $q$\par}
\medskip
Mykhailo Hontarenko and Karol \.{Z}yczkowski
\end{center}

\begin{quotation}
\small
We give the closed-form bordered-circulant matrix used in the main text,
prove Hermitian orthogonality of its systematic generator and
superregularity of the matrix in full detail, and
spell out the passage from the resulting Hermitian self-dual maximum
distance separable (MDS) code to
even- and odd-party AME states.  We also give an explicit Weyl-projector
formula and finite-field representatives that permit direct reconstruction
of small instances.  We retain nine explicit parent matrices and their
normalized one-party projections.  We then prove weighted descent for the
split family, a norm-one-circle parent construction, and its odd-subset
descent; together these families cover the parameters of seven of the nine
finite parents, without asserting equality of their matrices.
We conclude with the associated multiunitary matrices, an explicit
$27\times27$ Butson Hadamard example, and locally equivalent multiunitary
Butson representatives for all our even-party stabilizer AME states.
\end{quotation}

Sections~\ref{supp:sec-family}--\ref{supp:sec-parity} prove the original
$2q$-party construction; Secs.~\ref{supp:sec-split-descent}--\ref{supp:sec-circle-descent}
give the additional weighted families, using the state recipe of
Sec.~\ref{supp:sec-codes}.  The finite representatives and multiunitary
matrices are retained in Secs.~\ref{supp:sec-representatives},
\ref{supp:sec-finite-certificates}, and \ref{supp:sec-multiunitary}.
For a direct route to the quantum results, start with the state construction
in Sec.~\ref{supp:sec-codes}, consult Table~\ref{supp:tab-family-summary}
for the families and their parameter ranges, and then read
Sec.~\ref{supp:sec-multiunitary} for the multiunitary interpretation.
Sources for the existence map are collected in
Sec.~\ref{supp:sec-map-sources}.

\section{The universal bordered-circulant family}
\label{supp:sec-family}

Let $q=p^e$, where $p$ is an odd prime and $e\ge1$.
We work over the field $\F_{q^2}$ with $q^2$ elements and write
\begin{equation}
 \bar z=z^q,\qquad \Norm(z)=z\bar z\in\F_q.
 \label{supp:eq-conjugation}
\end{equation}
The bar is extended entrywise to vectors and matrices and coefficientwise
to polynomials; entrywise conjugation followed by $\mathsf T$ is the
Hermitian transpose over $\F_{q^2}/\F_q$.
We use $\F_q^\times$ for the nonzero elements of $\F_q$ and $I_m$
for the $m\times m$ identity matrix.  In the quantum-state formulas,
$I_S$ is the identity on the parties in a set $S$, and $|S|$ is their number.
For any positive integer $\mu$ and
$\kappa=(\kappa_0,\ldots,\kappa_{\mu-1})\in\F_{q^2}^{\mu}$, let $\bm 1_\mu$ be
the length-$\mu$ all-ones vector and define
\begin{equation}
 \operatorname{Circ}_{\mu}(\kappa)_{ij}
 =\kappa_{(j-i)\bmod\mu}
 \quad(0\le i,j<\mu),\qquad
 B_\mu(a,b;\kappa)=
 \begin{pmatrix}
  a&b\bm 1_\mu^{\mathsf T}\\
  b\bm 1_\mu&\operatorname{Circ}_{\mu}(\kappa)
 \end{pmatrix}.
 \label{supp:eq-border}
\end{equation}
For the universal family, put $\nu=q-1$ and choose
\begin{equation}
 \alpha\in\F_q^\times\ \text{nonsquare},\quad
 w\in\F_{q^2}^{\times}, \Norm(w)=\alpha,\quad
 \omega\in\F_q^\times, \operatorname{ord}(\omega)=q-1,
 \label{supp:eq-parameters}
\end{equation}
and $b\in\F_{q^2}^{\times}$ with $\Norm(b)=2$.  Put
\begin{align}
 a&=-b/{\bar b},
 \label{supp:eq-a}\\
 \kappa_\ell&=\frac{2(1+w\omega^{-\ell})}
 {\alpha\omega^{-2\ell}-1},\qquad 0\le\ell\le q-2,
 \label{supp:eq-kernel}\\
 A_q(\alpha,w,\omega,b)&=B_{\nu}(a,b;\kappa).
 \label{supp:eq-Aq}
\end{align}
Once an admissible parameter tuple is fixed, we abbreviate
$A_q(\alpha,w,\omega,b)$ to $A_q$.  These parameters allow a more general
choice than is needed in the Letter: its matrix is obtained by taking
$(\alpha,w,\omega)=(\gamma,g,\gamma)$, with $\gamma=g^{q+1}$ and the
same $b$, as stated in Corollary~\ref{supp:cor-primitive} below.
The norm map $\F_{q^2}^{\times}\to\F_q^\times$ is surjective, so the required
$w$ and $b$ exist.  The denominator in Eq.~\eqref{supp:eq-kernel} is
nonzero because $\alpha\omega^{-2\ell}$ is a nonsquare, whereas $1$ is a
square.  Its numerator is also nonzero: $w\notin\F_q$, since the norm of
an element of $\F_q$ is its square.

Equivalently, index the rows of $A_q$ by
$\{0\}\cup\F_q^\times$ and its columns by
$\{\infty\}\cup\F_q^\times$, in the order displayed in
Eq.~\eqref{supp:eq-border}.  Here $\infty$ is only a formal label for the
distinguished border column; it is not a field element and no limiting
procedure is involved.  We write $[A_q]_{x,y}$ for one matrix entry.
With $x,y\in\F_q^\times$, the completely explicit entrywise form is
\begin{align}
 [A_q]_{0,\infty}&=a,&
 [A_q]_{0,y}&=b,&
 [A_q]_{x,\infty}&=b,
 \label{supp:eq-borders}\\
 [A_q]_{x,y}&=\frac{2y(y+wx)}{\alpha x^2-y^2}.
 \label{supp:eq-core}
\end{align}
Indeed, taking $x=\omega^i$ and $y=\omega^j$ in
Eq.~\eqref{supp:eq-core} gives $\kappa_{j-i}$.  Equations
\eqref{supp:eq-a}--\eqref{supp:eq-core} are the explicit forms of $A_q$
used throughout the paper.

\begin{theorem}
\label{supp:thm-universal}
Every matrix in Eq.~\eqref{supp:eq-Aq} satisfies
\begin{equation}
 A_q\overline{A_q}^{\,\mathsf T}=-I_q,
 \label{supp:eq-gram}
\end{equation}
and is superregular: every nonempty square submatrix is nonsingular.
\end{theorem}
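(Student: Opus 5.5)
The proof splits into the Gram identity and superregularity.

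\emph{Gram identity.} We diagonalize the circulant block by the discrete Fourier transform over $\F_{q^2}$, using $\omega$ as the $(q-1)$st root of unity. Put $F_k=\sum_\ell\kappa_\ell\omega^{k\ell}$. To compute $F_k$, we write $\alpha=\theta^2$ with $\theta\in\F_{q^2}$, so that $\bar\theta=-\theta$. Partial fractions in $u=\omega^{-\ell}$ give
\[
\frac{2(1+wu)}{\alpha u^2-1}=\frac{c_+}{\theta u-1}-\frac{c_-}{\theta u+1},
\]
and expanding $1/(1-\theta u)$ as a geometric series summed over the $(q-1)$st roots of unity (using $\theta^{q-1}=\bar\theta/\theta=-1$) turns each $F_k$ into a closed monomial expression. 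The expected outcome is that $F_{2j}$ and $F_{2j+1}$ are powers of $\alpha$, times $w$ in the odd case, as in Eq.~\eqref{eq:spectrum}.

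Since $\bar\omega=\omega$, the Hermitian transpose of $\operatorname{Circ}(\kappa)$ is again circulant, with Fourier coefficients $\overline{F_{-k}}$. Hence $\operatorname{Circ}\,\overline{\operatorname{Circ}}^{\mathsf T}$ has eigenvalues $F_k\overline{F_{-k}}$. Using $\Norm(w)=\alpha$ together with the parity bookkeeping, we check that $F_k\overline{F_{-k}}=-1$ for $k\neq0$ and that $F_0=1$.

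The bordered product then reduces to three pieces:
\begin{itemize}
\item the corner entry $\Norm(a)+(q-1)\Norm(b)=1-2=-1$;
\item the border entries $a\bar b+b\,\overline{F_0}$-type sums, namely $-b+(q-1)\ldots$, which vanish by the choice $a=-b/\bar b$;
\item the block entries $\Norm(b)J+\operatorname{Circ}\,\overline{\operatorname{Circ}}^{\mathsf T}$. On the all-ones vector this acts as $2(q-1)+F_0\overline{F_0}=-2+1=-1$, and on the nonzero modes as $-1$.
\end{itemize}
Together these give $A_q\overline{A_q}^{\mathsf T}=-I_q$.

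\emph{Superregularity.} Suppose a square submatrix $A_q[R,C]$ is singular. We take a nonzero left null vector $\lambda$ supported on $R$ and form
\[
\phi(y)=\sum_{x\in R}\lambda_x[A_q]_{x,y}.
\]
By the partial-fraction form of Eq.~\eqref{supp:eq-core}, $\phi$ is a rational function in $y$ with simple poles at $\pm\theta x$ for $x\in R\setminus\{0\}$; the border row contributes a constant. We write $\phi=f/d$ with $d=\prod(y^2-\alpha x^2)$. Singularity of $A_q[R,C]$ says that $\phi$ vanishes at the $s$ selected columns, where the border column corresponds to the leading coefficient at $y=\infty$. Since $\deg f\le\deg d$, this is a root count.

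Dually, it may be cleaner to use a right null vector and the column function $\psi(x)=\sum_{y\in C}\mu_y[A_q]_{x,y}$, which is a polynomial of degree at most $s-1$ divided by known factors. Rows $x$ and $-x$ then correspond to evaluations at $\theta x$ and $-\theta x$, as in the main text. I will follow the main-text sketch. A pair $\{x,-x\}$ in $R$ forces $f(\pm\theta x)=0$, and we strip these factors. A lone row gives $f(\theta x)=\epsilon f(-\theta x)$ with $\Norm(\epsilon)=-1$. Writing $f=R(u^2)+uS(u^2)$ and taking norms, we get
\[
h(v)=R\bar R-vS\bar S,
\]
which vanishes at $v=\alpha x^2$ for each lone row $x$. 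These roots are distinct, because $x\mapsto x^2$ is injective on lone representatives. Degree counting gives $\deg h\le t-1$, so $h=0$. Degree parity of $R\bar R$ versus $vS\bar S$ then forces $R=S=0$, a contradiction. The border row and border column cases adjust the count by one, and must be handled separately.

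\emph{Main obstacle.} The hardest part is this last step: setting up the degree bookkeeping consistently across the four border cases (border row and border column each present or absent) and verifying that the multiplier $\epsilon$ really has norm $-1$. That verification uses $\Norm(c_+/c_-)$ computed from $\Norm(w)=\alpha$. I would first work out $c_\pm$ explicitly and check the $q=3$ example against $A_3$.
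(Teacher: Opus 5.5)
Your plan is sound and is essentially the paper's proof. For the Gram identity you diagonalize the circulant core by the finite-field Fourier transform; your forward partial-fraction computation of $F_k$ via $\theta^{q-1}=-1$ is equivalent to the paper's inverse-transform check. For superregularity you use the right-null-vector model $\varphi=f/d_J$ with $\deg f\le s-1$:
\begin{itemize}
\item Pair rows force $f(\pm\theta x)=0$, because the $2\times2$ system has determinant $\eta_+^2-\eta_-^2=w/\theta\ne0$.
\item A lone row gives the multiplier $(\eta_-/\eta_+)\,d_J(\theta x)/d_J(-\theta x)$, which has norm $-1$. The second factor has norm $1$ because $d_J$ has coefficients in $\F_q$.
\item The parity argument on $h=R\bar R-vS\bar S$ then finishes. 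Taking norms directly is legitimate, since $\Norm(f_1(\theta x))+\Norm(f_1(-\theta x))=2h(\alpha x^2)$ for $f_1=R(u^2)+uS(u^2)$. So the paper's Cayley transform is optional.
\end{itemize}

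Two corrections to your bookkeeping.

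First, the left-null-vector version you begin with does not yield a root count. Its numerator has degree up to twice the number of nonborder rows, not $s-1$, so the right-null model is needed, not just cleaner.

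Second, your four border cases collapse to a single extra case.
\begin{itemize}
\item The border column is just the constant basis function, so it leaves $\deg f\le s-1$ unchanged.
\item The border row acts as evaluation at $0$ on every selected column, corner included. This holds precisely because $a=-b/\bar b=-b^2/2$; it is the factorization $A_q=\Delta_{\mathrm{row}}L\Delta_{\mathrm{col}}$, which you should verify.
\item Then $f(0)=0$ gives $R(0)=0$, hence $h(0)=0$. This is the one additional root that offsets the one additional degree; the paper instead factors $U$ out of $P_1$.
\end{itemize}
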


We use the terminology of Ref.~\cite{Keri2006}.  The underlying MDS
generator criterion is Theorem~2.4.3 of Ref.~\cite{HuffmanPless2003};
its systematic-matrix form follows by the determinant expansion in
Sec.~\ref{supp:sec-codes}.

\begin{corollary}[Primitive-element form]
\label{supp:cor-primitive}
Let $g$ be any primitive element of $\F_{q^2}^{\times}$, and choose
$b\in\F_{q^2}^{\times}$ with $\Norm(b)=2$.  Put
\begin{equation}
 \gamma=g^{q+1},\qquad
 a=-b/{\bar b},\qquad
 \kappa_\ell=\frac{2(1+g\gamma^{-\ell})}
 {\gamma^{1-2\ell}-1}.
 \label{supp:eq-chirp}
\end{equation}
The resulting matrix satisfies Theorem~\ref{supp:thm-universal}.
\end{corollary}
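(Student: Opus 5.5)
The corollary is a specialization of Theorem~\ref{supp:thm-universal}. I will check that the tuple $(\alpha,w,\omega)=(\gamma,g,\gamma)$, together with the given $b$, is admissible in the sense of Eq.~\eqref{supp:eq-parameters}. I will then check that the kernel of Eq.~\eqref{supp:eq-kernel} reduces to Eq.~\eqref{supp:eq-chirp}. After that the corollary is immediate, since the matrix is $B_{q-1}(a,b;\kappa)$ with exactly these entries.

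\textbf{Step 1: $\gamma$ has order $q-1$.} Since $g$ generates the cyclic group $\F_{q^2}^\times$ of order $q^2-1=(q-1)(q+1)$, the element $\gamma=g^{q+1}$ has order exactly $q-1$. In particular $\bar\gamma=g^{q(q+1)}=g^{q+1}\cdot g^{q^2-1}=\gamma$, so $\gamma\in\F_q^\times$. Hence $\gamma$ generates $\F_q^\times$ and $\omega=\gamma$ satisfies $\operatorname{ord}(\omega)=q-1$.

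\textbf{Step 2: $\gamma$ is a nonsquare in $\F_q$.} Because $q$ is odd, $q-1$ is even, and in a cyclic group of even order a generator is never a square. So $\alpha=\gamma$ is a nonsquare.

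\textbf{Step 3: $\Norm(g)=\gamma$.} This is just $\Norm(g)=g\cdot g^q=g^{q+1}=\gamma$, so $w=g$ satisfies $\Norm(w)=\alpha$. Together with $\Norm(b)=2$, which is assumed, all hypotheses of Eq.~\eqref{supp:eq-parameters} hold.

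\textbf{Step 4: the kernel.} Substituting into Eq.~\eqref{supp:eq-kernel} gives
\[
\kappa_\ell=\frac{2(1+g\gamma^{-\ell})}{\gamma\cdot\gamma^{-2\ell}-1}=\frac{2(1+g\gamma^{-\ell})}{\gamma^{1-2\ell}-1},
\]
which is Eq.~\eqref{supp:eq-chirp}. The corner $a=-b/\bar b$ agrees with Eq.~\eqref{supp:eq-a}.

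Theorem~\ref{supp:thm-universal} then yields both $A_q\overline{A_q}^{\,\mathsf T}=-I_q$ and superregularity. Note also that with $x=\gamma^i$ and $y=\gamma^j$, Eq.~\eqref{supp:eq-core} reproduces the Letter's formula Eq.~\eqref{eq:Aq-explicit}, so Theorem~\ref{thm:main} follows as well.

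The only step that is not pure substitution is the nonsquare argument in Step~2, and it is elementary. The real content lies in Theorem~\ref{supp:thm-universal}, which we are allowed to assume here.
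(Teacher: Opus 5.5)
Your proposal is correct and follows essentially the same route as the paper. You show that $\gamma=g^{q+1}$ has order $q-1$, so it is a primitive and hence nonsquare element of $\F_q^\times$; you note that $\Norm(g)=\gamma$; and you specialize Theorem~\ref{supp:thm-universal} to $(\alpha,w,\omega)=(\gamma,g,\gamma)$, adding an explicit check that Eq.~\eqref{supp:eq-kernel} becomes Eq.~\eqref{supp:eq-chirp}, which the paper leaves implicit.
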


\noindent\textit{Proof.}---Since $g$ has order $q^2-1$, the order of
$g^{q+1}$ is
$(q^2-1)/(q+1)=q-1$.  Thus $\gamma\in\F_q^\times$ is primitive and,
because $q-1$ is even, is a nonsquare.  Moreover
$\Norm(g)=g^{q+1}=\gamma$.  Taking
$(\alpha,w,\omega)=(\gamma,g,\gamma)$ in
Eq.~\eqref{supp:eq-parameters} proves the claim. \hfill$\square$

\section{Hermitian orthogonality from the finite-field Fourier transform}
\label{supp:sec-fourier}

Recall that $\nu=q-1$.  We use the discrete Fourier transform over
the finite field $\F_{q^2}$: this is a linear map on
$\F_{q^2}^{\nu}$ with matrix entries $\omega^{k\ell}$, where
$\omega\in\F_q$ has order $\nu$.  Its coefficients and eigenvectors are
finite-field elements.  For the circulant kernel, the transform is
\begin{equation}
 F_k=\sum_{\ell=0}^{\nu-1}\kappa_\ell\omega^{k\ell},
 \qquad k\in\mathbb Z/\nu\mathbb Z.
 \label{supp:eq-dft}
\end{equation}

Since $\nu$ is nonzero in $\F_q$, the inverse transform exists and is
$\kappa_\ell=\nu^{-1}\sum_{k=0}^{\nu-1}F_k\omega^{-k\ell}$.
We use ``Fourier coefficients'' below in this finite-field sense.

\begin{lemma}[Finite-field Fourier coefficients]
\label{supp:lem-fourier}
The kernel in Eq.~\eqref{supp:eq-kernel} has Fourier coefficients
\begin{equation}
 F_{2j}=\alpha^j,\qquad F_{2j+1}=w\alpha^j,\qquad
 0\le j\le\frac{q-3}{2}.
 \label{supp:eq-spectrum}
\end{equation}
Consequently Eq.~\eqref{supp:eq-gram} holds.
\end{lemma}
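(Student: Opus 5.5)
The plan is to compute the Fourier coefficients directly by turning the rational kernel into a polynomial in $t=\omega^{-\ell}$, and then to diagonalize the circulant block so that Eq.~\eqref{supp:eq-gram} reduces to the scalar identities $F_k\overline{F_{-k}}=-1$ for $k\ne0$, together with a short border computation.

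\emph{Step 1 (polynomial form of the kernel).} Put $t=\omega^{-\ell}\in\F_q^\times$, so $t^{q-1}=1$. Since $\alpha$ is a nonsquare, Euler's criterion gives $\alpha^{(q-1)/2}=-1$. The finite geometric sum
\begin{equation*}
 (1-\alpha t^2)\sum_{m=0}^{(q-3)/2}(\alpha t^2)^m=1-\alpha^{(q-1)/2}t^{q-1}=2
\end{equation*}
then gives $1/(\alpha t^2-1)=-\tfrac12\sum_{m=0}^{(q-3)/2}\alpha^m t^{2m}$. Hence
\begin{equation*}
 \kappa_\ell=-\sum_{m=0}^{(q-3)/2}\alpha^m\omega^{-2m\ell}-w\sum_{m=0}^{(q-3)/2}\alpha^m\omega^{-(2m+1)\ell}.
\end{equation*}
This is a polynomial in $\omega^{-\ell}$ whose exponents $0,1,\dots,q-2$ are distinct modulo $\nu$. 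The coefficient of $\omega^{-n\ell}$ is $-\alpha^j$ for $n=2j$ and $-w\alpha^j$ for $n=2j+1$.

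\emph{Step 2 (read off $F_k$).} Substitute into Eq.~\eqref{supp:eq-dft} and use orthogonality of characters: $\sum_{\ell=0}^{\nu-1}\omega^{(k-n)\ell}=\nu\,\delta_{k,n}$. Since $\nu=q-1=-1$ in $\F_q$, this sum equals $-\delta_{k,n}$. The minus sign cancels the one from Step 1, which yields exactly Eq.~\eqref{supp:eq-spectrum}.

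\emph{Step 3 (Gram identity).} Write $C=\operatorname{Circ}_\nu(\kappa)$ and $v_k=(\omega^{ki})_i$. Then $Cv_k=F_kv_k$. Because $\bar\omega=\omega$, the matrix $\overline C^{\mathsf T}$, with entries $\bar\kappa_{i-j}$, acts on $v_k$ by $\overline{F_{-k}}$. We check the products case by case:
\begin{itemize}
\item For $k=2j\ne0$ we have $-k\equiv 2\bigl(\tfrac{q-1}{2}-j\bigr)$, so $F_{-k}=-\alpha^{-j}$ and $F_k\overline{F_{-k}}=-1$.
\item For $k=2j+1$ we have $-k\equiv 2\bigl(\tfrac{q-3}{2}-j\bigr)+1$, so $F_k\overline{F_{-k}}=\Norm(w)\,\alpha^{(q-3)/2}=\alpha^{(q-1)/2}=-1$.
\item For $k=0$, $F_0=1$.
\end{itemize}
The projector onto $v_0=\bm 1_\nu$ is $J/\nu=-J$, where $J$ is the all-ones matrix. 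Therefore $C\overline C^{\mathsf T}=-I_\nu+2(-J)=-I_\nu-2J$.

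Now compute the blocks of $B_\nu(a,b;\kappa)\,\overline{B_\nu}^{\mathsf T}$:
\begin{itemize}
\item The corner is $\Norm(a)+\nu\Norm(b)=1-2=-1$.
\item The off-diagonal row is $(a\bar b+b\,\overline{F_0})\bm 1^{\mathsf T}=(-b+b)\bm 1^{\mathsf T}=0$. This uses that the column sums of $\overline C$ equal $\overline{F_0}=1$.
\item The lower block is $\Norm(b)J+C\overline C^{\mathsf T}=2J-I_\nu-2J=-I_\nu$.
\end{itemize}
Together these give Eq.~\eqref{supp:eq-gram}.

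The main obstacle is Step 1: finding the finite geometric-series inversion that uses $\alpha^{(q-1)/2}=-1$, and keeping the signs straight that come from $\nu=-1$ in $\F_q$ and from the factor $2$. After that, the index bookkeeping for $-k$ modulo $\nu$ in Step 3 is routine but needs care with parity.
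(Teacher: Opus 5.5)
Your proposal is correct and takes essentially the paper's route. You use the same finite geometric series, with $\alpha^{\nu/2}=-1$ and $\nu=-1$ in $\F_q$, though run in the opposite direction: you expand the kernel and read off $F_k$, whereas the paper inverse-transforms the claimed spectrum to recover $\kappa_\ell$. You then make the same mode-by-mode check $F_k\overline{F_{-k}}=-1$ and the same border cancellation, which you package compactly as $C\overline{C}^{\,\mathsf T}=-I_\nu-2J$.
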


\noindent\textit{Proof.}---The inverse Fourier transform of the right-hand side of
Eq.~\eqref{supp:eq-spectrum} is
\begin{equation}
 \nu^{-1}(1+w\omega^{-\ell})
 \sum_{j=0}^{\nu/2-1}(\alpha\omega^{-2\ell})^j.
 \label{supp:eq-inverse-dft}
\end{equation}
The ratio $\delta_\ell=\alpha\omega^{-2\ell}$ is a nonsquare, so
$\delta_\ell^{\nu/2}=-1$.  Hence the geometric sum is
$-2/(\delta_\ell-1)$.  Since
$\nu=q-1=-1$ as an element of $\F_q$, Eq.~\eqref{supp:eq-inverse-dft}
is exactly Eq.~\eqref{supp:eq-kernel}.

Let $C=\operatorname{Circ}_{\nu}(\kappa)$.  On the Fourier vector
$v^{(k)}=(1,\omega^k,\ldots,\omega^{(\nu-1)k})^{\mathsf T}$,
$C$ has eigenvalue $F_k$, whereas
$\overline C^{\,\mathsf T}$ has eigenvalue
$\overline{F_{-k}}$.  Thus the eigenvalue of
$C\overline C^{\,\mathsf T}$ on this mode is
$F_k\overline{F_{-k}}$.

For a nonzero even mode $k=2j$, with $1\le j\le\nu/2-1$, its negative is
indexed by $\nu/2-j$, and
\begin{equation}
 F_{2j}\overline{F_{-2j}}
 =\alpha^j\alpha^{\nu/2-j}=\alpha^{\nu/2}=-1.
 \label{supp:eq-even-pair}
\end{equation}
For an odd mode $k=2j+1$, with $0\le j\le\nu/2-1$, its negative is
indexed by $j'=\nu/2-1-j$, and therefore
\begin{equation}
 F_{2j+1}\overline{F_{-(2j+1)}}
 =\Norm(w)\,\alpha^{\nu/2-1}
 =\alpha^{\nu/2}=-1.
 \label{supp:eq-odd-pair}
\end{equation}
These formulas also cover the self-inverse mode $k=\nu/2$.  Finally,
$F_0=1$, $\Norm(a)=1$, $\Norm(b)=2$, and
\begin{equation}
 a\bar b+b\bar F_0=0.
 \label{supp:eq-border-cancel}
\end{equation}
In the Fourier basis, the zero mode of the lower block of
$A_q\overline{A_q}^{\,\mathsf T}$ is
$\Norm(F_0)+\nu\Norm(b)=1-2=-1$, every nonzero mode is $-1$ by
Eqs.~\eqref{supp:eq-even-pair} and \eqref{supp:eq-odd-pair}, and the
upper-left entry is $\Norm(a)+\nu\Norm(b)=1-2=-1$.
Equation~\eqref{supp:eq-border-cancel} removes the off-diagonal blocks.
This proves Eq.~\eqref{supp:eq-gram}. \hfill$\square$

\section{Partial-fraction representation}
\label{supp:sec-factorization}

Hermitian orthogonality is now established.  To prove superregularity,
we will turn a hypothetical singular minor into a nonzero polynomial.
The selected rows impose constraints on this polynomial that will
contradict its degree bound in Sec.~\ref{supp:sec-parity}.
The partial-fraction representation below makes this conversion possible.

Choose $\theta\in\F_{q^2}$ with $\theta^2=\alpha$.  Since $\alpha$ is a
nonsquare, $Y^2-\alpha$ is irreducible over $\F_q$ and
\begin{equation}
 \bar\theta=-\theta.
 \label{supp:eq-theta}
\end{equation}
Moreover $\Norm(\theta)=-\alpha\ne\alpha=\Norm(w)$, so
$w\ne\pm\theta$.  Define
\begin{equation}
 \eta_+=\frac{\theta+w}{2\theta},\qquad
 \eta_- =\frac{w-\theta}{2\theta}.
 \label{supp:eq-etas}
\end{equation}
Then
\begin{equation}
 \eta_+-\eta_-=1,\qquad
 \eta_++\eta_-=\frac w\theta,\qquad
 \Norm\!\left(\frac{\eta_-}{\eta_+}\right)=-1.
 \label{supp:eq-eta-properties}
\end{equation}
For the last equality, direct multiplication gives
$\Norm(w-\theta)=\theta(w-\bar w)$ and
$\Norm(w+\theta)=-\theta(w-\bar w)$.

The core entries in Eq.~\eqref{supp:eq-core} admit the partial-fraction
form
\begin{equation}
 [A_q]_{x,y}=2y\left(
 \frac{\eta_+}{\theta x-y}
 -\frac{\eta_-}{-\theta x-y}\right).
 \label{supp:eq-partial-fraction}
\end{equation}
Define kernels and row functionals by
\begin{align}
 \mathcal K(u,y)&=(u-y)^{-1},&
 \mathcal K(u,\infty)&=-1,
 \label{supp:eq-cauchy-kernel}\\
 \Lambda_0(\varphi)&=\varphi(0),&
 \Lambda_x(\varphi)&=
 \eta_+\varphi(\theta x)-\eta_-\varphi(-\theta x).
 \label{supp:eq-functionals}
\end{align}
Let $L$ have rows $\{0\}\cup\F_q^\times$, columns
$\{\infty\}\cup\F_q^\times$, and entries
$L_{i,v}=\Lambda_i(\mathcal K(\mathord\cdot,v))$.  Equations
\eqref{supp:eq-borders} and \eqref{supp:eq-partial-fraction} give
\begin{equation}
 A_q=\Delta_{\mathrm{row}}L\Delta_{\mathrm{col}},
 \quad
 \Delta_{\mathrm{row}}=\operatorname{diag}(-b/2,1,\ldots,1),
 \quad
 (\Delta_{\mathrm{col}})_{\infty,\infty}=-b,\quad
 (\Delta_{\mathrm{col}})_{y,y}=2y\quad(y\in\F_q^\times).
 \label{supp:eq-factorization}
\end{equation}
For example, $L_{0,y}=-1/y$, $L_{x,\infty}=-1$, and
$L_{0,\infty}=-1$.  The corner becomes
$-b^2/2=-b/\bar b=a$ because $b\bar b=2$.  Both diagonal factors in
Eq.~\eqref{supp:eq-factorization} are invertible, so $A_q$ is
superregular exactly when $L$ is.

\section{A singular minor would produce a polynomial of bounded degree}
\label{supp:sec-polynomial}

Fix row and column sets $I,J$ with $|I|=|J|=s\ge1$.  Put
\begin{equation}
 J_{\mathrm{fin}}=J\setminus\{\infty\},\qquad
 d_J(u)=\prod_{y\in J_{\mathrm{fin}}}(u-y),
 \label{supp:eq-denominator}
\end{equation}
and
\begin{equation}
 \mathcal V_J=
 \left\{\frac{f(u)}{d_J(u)}:f\in\F_{q^2}[u],\ \deg f\le s-1\right\}.
 \label{supp:eq-V}
\end{equation}
The $s$ functions $\{\mathcal K(\mathord\cdot,v):v\in J\}$ form a
basis of $\mathcal V_J$.  If $\infty\notin J$, this is the standard
partial-fraction basis with $s$ distinct simple poles.  If
$\infty\in J$, the finite kernels have $s-1$ distinct poles and the
remaining basis element is the constant $-1=-d_J/d_J$.

In $L[I,J]$, each column represents one of these rational functions,
and each row applies a functional $\Lambda_i$ to it.  A nonzero vector
in the null space therefore gives a nonzero linear combination of the
basis functions annihilated by all selected row functionals.
Equivalently, $\det L[I,J]=0$ if and only if there is a nonzero
$\varphi=f/d_J\in\mathcal V_J$ satisfying
\begin{equation}
 \Lambda_i(\varphi)=0\qquad(i\in I).
 \label{supp:eq-annihilation}
\end{equation}
Let $\varepsilon=1$ if $0\in I$ and $\varepsilon=0$ otherwise.  Among
the nonzero rows, let $h$ be the number of complete sign pairs
$\{x,-x\}$ and let $x_1,\ldots,x_\sigma$ be the unpaired rows, one from
each remaining sign class.  Thus
\begin{equation}
 s=\varepsilon+2h+\sigma.
 \label{supp:eq-row-count}
\end{equation}
The points $0$ and $\pm\theta x$ are outside the zero set of $d_J$:
the finite roots of $d_J$ lie in $\F_q^\times$, whereas
$\theta x\notin\F_q$.

If $x$ and $-x$ are both selected, write
$v_+=\varphi(\theta x)$ and $v_-=\varphi(-\theta x)$.  The two equations in
Eq.~\eqref{supp:eq-annihilation} are
\begin{equation}
 \begin{pmatrix}
  \eta_+&-\eta_-\\
  -\eta_-&\eta_+
 \end{pmatrix}
 \begin{pmatrix}v_+\\v_-\end{pmatrix}=0.
 \label{supp:eq-pair-system}
\end{equation}
Crucially, this system is nonsingular:
\begin{equation}
 \det=\eta_+^2-\eta_-^2
 = (\eta_+-\eta_-)(\eta_++\eta_-)=\frac w\theta\ne0.
 \label{supp:eq-pair-determinant}
\end{equation}
Hence $v_+=v_-=0$, and therefore
\begin{equation}
 f(\theta x)=f(-\theta x)=0.
 \label{supp:eq-pair-zeros}
\end{equation}
For an unpaired row $x_t$, Eq.~\eqref{supp:eq-annihilation} gives
\begin{equation}
 f(\theta x_t)=\xi_t f(-\theta x_t),\qquad
 \xi_t=\frac{\eta_-}{\eta_+}
 \frac{d_J(\theta x_t)}{d_J(-\theta x_t)}.
 \label{supp:eq-singleton}
\end{equation}
Because $d_J$ has coefficients in $\F_q$,
$\overline{d_J(\theta x_t)}=d_J(-\theta x_t)$;
Eq.~\eqref{supp:eq-eta-properties}
then yields
\begin{equation}
 \Norm(\xi_t)=-1.
 \label{supp:eq-xi-norm}
\end{equation}
Finally, if $\varepsilon=1$, the border condition is
\begin{equation}
 f(0)=0.
 \label{supp:eq-border-zero}
\end{equation}

\section{The degree argument}
\label{supp:sec-parity}

We now use the row constraints to rule out such a polynomial $f$.
Complete sign pairs force zeros of both its even and odd parts; unpaired
rows give roots of a polynomial built from their norms.  Comparing its
degree with the number of roots, and then comparing the parity of the
degrees in the resulting identity, yields the contradiction.

For $R\in\F_{q^2}[U]$, let $\bar R$ denote coefficientwise conjugation and put
\begin{equation}
 \mathcal N(R)=R\bar R\in\F_q[U].
 \label{supp:eq-polynomial-norm}
\end{equation}
If $R\ne0$, then
\begin{equation}
 \deg\mathcal N(R)=2\deg R,
 \qquad
 \operatorname{lc}\mathcal N(R)=\Norm(\operatorname{lc}R)\ne0.
 \label{supp:eq-norm-degree}
\end{equation}
Here $\operatorname{lc}(R)$ denotes the leading coefficient of $R$.
We also need one elementary identity.  If $\Norm(\xi)=-1$, then
$\xi\ne-1$ and
\begin{equation}
 \lambda=\frac{\xi-1}{\xi+1}
 \quad\text{satisfies}\quad
 \Norm(\lambda)=-1.
 \label{supp:eq-cayley}
\end{equation}
Indeed, $\bar\xi=-\xi^{-1}$, so
$\bar\lambda=-1/\lambda$.

Suppose that a nonzero $f$ with $\deg f\le s-1$ obeyed
Eqs.~\eqref{supp:eq-pair-zeros}, \eqref{supp:eq-singleton}, and, when
present, Eq.~\eqref{supp:eq-border-zero}.  Decompose it into even and odd
parts,
\begin{equation}
 f(u)=P(u^2)+uQ(u^2),
 \label{supp:eq-even-odd}
\end{equation}
We write $U$ for the polynomial variable of $P$ and $Q$, so this
expression substitutes $U=u^2$.  The symbols $U_x$ and $V_t$ below
will denote particular evaluation points.  The degree bounds are
\begin{equation}
 \deg P\le\left\lfloor\frac{s-1}{2}\right\rfloor,
 \qquad
 \deg Q\le\left\lfloor\frac{s-2}{2}\right\rfloor.
 \label{supp:eq-PQ-bounds}
\end{equation}
For every sign class put $U_x=\alpha x^2=(\theta x)^2$.  Distinct sign
classes give distinct nonzero elements of $\F_q$.  Each complete pair
forces both $P(U_x)$ and $Q(U_x)$ to vanish; here odd characteristic is
used.  If $U_1,\ldots,U_h$ correspond to the complete pairs, set
\begin{equation}
 \Pi(U)=\prod_{i=1}^{h}(U-U_i).
 \label{supp:eq-Pi}
\end{equation}
Then
\begin{equation}
 P=\Pi P_1,\qquad Q=\Pi Q_1.
 \label{supp:eq-factor-PQ}
\end{equation}

For an unpaired row, set $V_t=\alpha x_t^2$.  Distinct sign classes give
$\Pi(V_t)\ne0$, so Eq.~\eqref{supp:eq-singleton} becomes
\begin{equation}
 \theta x_t Q_1(V_t)=\lambda_t P_1(V_t),
 \qquad
 \lambda_t=\frac{\xi_t-1}{\xi_t+1},
 \qquad \Norm(\lambda_t)=-1.
 \label{supp:eq-singleton-reduced}
\end{equation}
Since $\Norm(\theta x_t)=-\alpha x_t^2=-V_t$, taking norms gives
\begin{equation}
 V_t\mathcal N(Q_1)(V_t)=\mathcal N(P_1)(V_t),
 \qquad t=1,\ldots,\sigma.
 \label{supp:eq-norm-values}
\end{equation}

First suppose that the border row is absent.  Then $s=2h+\sigma$, and
Eqs.~\eqref{supp:eq-PQ-bounds} and \eqref{supp:eq-factor-PQ} give
\begin{equation}
 \deg P_1\le\left\lfloor\frac{\sigma-1}{2}\right\rfloor,
 \qquad
 \deg Q_1\le\left\lfloor\frac{\sigma-2}{2}\right\rfloor.
 \label{supp:eq-no-border-bounds}
\end{equation}
Here $P_1,Q_1$ are the polynomials denoted by $R,S$ in the Letter's
proof outline, and $\sigma$ is its number $t$ of unpaired rows.
For $\sigma\ge1$, the polynomial
\begin{equation}
 \mathcal N(P_1)-U\mathcal N(Q_1)
 \label{supp:eq-no-border-polynomial}
\end{equation}
has degree at most $\sigma-1$ and, by
Eq.~\eqref{supp:eq-norm-values}, has the $\sigma$ distinct roots
$V_1,\ldots,V_\sigma$.  It therefore vanishes identically.  If both terms
were nonzero, Eq.~\eqref{supp:eq-norm-degree} would give an even degree on
the left and an odd degree on the right.  If either term is zero, the
identity forces the other to be zero as well.  Hence $P_1=Q_1=0$.  If
$\sigma=0$, then
$\deg P,\deg Q<h=\deg\Pi$, so divisibility by $\Pi$ gives $P=Q=0$
directly.

Now suppose that the border row is present.  Then $s=2h+\sigma+1$ and
\begin{equation}
 \deg P_1\le\left\lfloor\frac\sigma2\right\rfloor,
 \qquad
 \deg Q_1\le\left\lfloor\frac{\sigma-1}{2}\right\rfloor.
 \label{supp:eq-border-bounds}
\end{equation}
Since $\Pi(0)\ne0$, Eq.~\eqref{supp:eq-border-zero} gives $P_1(0)=0$;
hence $P_1=UP_2$, with
\begin{equation}
 \deg P_2\le\left\lfloor\frac\sigma2\right\rfloor-1.
 \label{supp:eq-P2-bound}
\end{equation}
For $\sigma\ge1$, taking norms in
Eq.~\eqref{supp:eq-singleton-reduced} and dividing by $V_t\ne0$ yields
\begin{equation}
 \mathcal N(Q_1)(V_t)=V_t\mathcal N(P_2)(V_t).
 \label{supp:eq-border-norm-values}
\end{equation}
Thus $\mathcal N(Q_1)-U\mathcal N(P_2)$ has degree at most
$\sigma-1$ and $\sigma$ distinct roots.  It is identically zero.  If both
terms are nonzero, their degrees have opposite parity; if either is zero,
the identity forces both to be zero.  Hence $Q_1=P_2=0$.  When
$\sigma=0$, divisibility by $\Pi$ gives $Q=0$ and makes $P_1$ constant;
the condition $P_1(0)=0$ then gives $P=0$.

In both cases $P=Q=0$, hence $f=0$, contrary to its construction.
Therefore no square minor of $L$, and by
Eq.~\eqref{supp:eq-factorization} no square minor of $A_q$, can vanish.
Together with Lemma~\ref{supp:lem-fourier}, this proves
Theorem~\ref{supp:thm-universal}.

\section{Scope of the construction in characteristic two}
\label{supp:sec-char-two}

The formula used here relies on odd characteristic.  To state the
obstruction to its direct extension, let $q=2^r$ with $r\ge3$ and choose
$\theta\in\F_{q^2}\setminus\F_q$ with
$\bar\theta=\theta+1$.  For
\begin{equation}
 f(t)=c_0+\frac{c_1}{t+\theta}
             +\frac{c_2}{t+\bar\theta},
 \qquad t\in\F_q^\times,
 \label{supp:eq-char-two-kernel}
\end{equation}
write
$C_f=(f(y/x))_{x,y\in\F_q^\times}$ and
$\widehat f(k)=\sum_{t\in\F_q^\times}f(t)t^k$, with
$k\in\mathbb Z/(q-1)\mathbb Z$.  This is the direct characteristic-two
counterpart of the constant-plus-two-poles kernel obtained from
Eq.~\eqref{supp:eq-partial-fraction} after the invertible border scalings.

\begin{lemma}[Obstruction to the direct characteristic-two analogue]
\label{supp:lem-char-two}
No bordered circulant whose core is $C_f$ with $f$ as in
Eq.~\eqref{supp:eq-char-two-kernel} can satisfy the matrix condition
$A\overline A^{\,\mathsf T}=-I_q$
for $q=2^r\ge8$.
\end{lemma}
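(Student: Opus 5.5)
The plan is to repeat, in characteristic two, the Fourier diagonalization used in Lemma~\ref{supp:lem-fourier}, and then show that the resulting scalar conditions on the modes cannot all hold. Since $-1=1$ in characteristic two, the target is $A\overline A^{\,\mathsf T}=I_q$.

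Write the bordered circulant as $B_{q-1}(a,b;\kappa)$ with $\kappa_\ell=f(\omega^\ell)$. The Fourier vectors $v^{(k)}$ diagonalize $C_f$ with eigenvalue $\widehat f(k)$ (up to the sign of $k$, fixed by the labeling $y/x$), exactly as in Lemma~\ref{supp:lem-fourier}. The lower-right block of $A\overline A^{\,\mathsf T}$ is $C_f\overline{C_f}^{\,\mathsf T}+\Norm(b)\bm 1\bm 1^{\mathsf T}$, and $\bm 1\bm 1^{\mathsf T}$ acts only on the zero mode. Hence the condition splits into:
\begin{itemize}
\item for every nonzero mode $k$, $\widehat f(k)\overline{\widehat f(-k)}=1$;
\item at the zero mode, $\Norm(\widehat f(0))+(q-1)\Norm(b)=1$, where $q-1=1$ in $\F_q$;
\item the corner and off-diagonal cancellation conditions, as in Eq.~\eqref{supp:eq-border-cancel}.
\end{itemize}
I will only need the nonzero-mode conditions. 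There are $q-2\ge6$ of them, which is where $q\ge8$ enters.

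Next I compute $\widehat f(k)$ in closed form. For $c\notin\F_q$ and $t\in\F_q^\times$, the relation $t^{q-1}=1$ gives
\[
\sum_{j=0}^{q-2}(t/c)^j=\frac{1-c^{1-q}}{1-t/c}.
\]
Hence $1/(t+c)=\mu_c\sum_{j=0}^{q-2}t^jc^{-j}$ with an explicit nonzero constant $\mu_c$. Applying this with $c=\theta$ and $c=\bar\theta$, and using the orthogonality $\sum_{t\in\F_q^\times}t^{j+k}=(q-1)\,\delta_{j+k\equiv0}$, every Fourier coefficient becomes
\[
\widehat f(k)=c_0(q-1)\delta_{k,0}+\beta_1\,\theta^{\rho(k)}+\beta_2\,\bar\theta^{\rho(k)} .
\]
Here $\rho(k)\in\{0,\dots,q-2\}$ is the representative of $k$ modulo $q-1$, and $\beta_1,\beta_2$ are explicit nonzero multiples of $c_1,c_2$. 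So on the nonzero modes the spectrum is a two-term exponential sum in the bases $\theta$ and $\bar\theta$.

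I then substitute this into $\widehat f(k)\overline{\widehat f(-k)}=1$ for $k=1,\dots,q-2$. Conjugation swaps $\theta$ and $\bar\theta$, and $\rho(-k)=q-1-\rho(k)$. Expanding the product therefore gives a linear combination of the constant sequence and of the geometric sequences $(\theta/\bar\theta)^k$ and $(\bar\theta/\theta)^k$, with coefficients built from $\Norm(\beta_i)$, $\beta_1\bar\beta_2$ and fixed powers of $\theta,\bar\theta$. The exponent $q-1$ is absorbed into these constant coefficients.

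This combination must equal the constant $1$ on at least $6$ consecutive values of $k$. The ratio $\zeta=\theta/\bar\theta$ has norm one and lies outside $\F_q$ generically. If $1,\zeta,\zeta^{-1}$ are pairwise distinct, a $3\times3$ Vandermonde argument over three consecutive $k$ forces the $\zeta^{\pm1}$ coefficients to vanish and the constant coefficient to equal $1$. I expect this to yield $\beta_1\bar\beta_2\,(\text{nonzero constant})=0$, so one of $c_1,c_2$ vanishes. The surviving single-pole term then has modulus $\Norm(\beta)\,\Norm(\theta)^{q-1}$ independent of $k$, while the constant equation ties it to the $\theta$-dependent factor $\theta^{q-1}\bar\theta^{\,0}$-type cross terms. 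Comparing two different values of $k$, which is possible since $q-2\ge3$, should give the contradiction.

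The main obstacle is the bookkeeping of the wrap-around $\rho(-k)=q-1-\rho(k)$ and of the degenerate cases. These are when $\zeta$ coincides with $\zeta^{-1}$ or has small order, which happens exactly when $\theta^{2}\in\F_q$ or $\theta^{q-1}$ has order dividing $3$. I would treat these first by using $\bar\theta=\theta+1$ to write $\zeta$ explicitly and checking, with the trace condition $\Tr_{\F_q/\F_2}$, that $\zeta^2\ne1$. The finitely many remaining small-order coincidences I would rule out by hand for $r\ge3$. That is also where the bound $q\ge8$ should be sharp, since for $q=4$ enough coincidences could allow a solution.
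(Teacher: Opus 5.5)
Your route is the paper's: diagonalize by Fourier modes, keep only the nonzero modes, write $\widehat f(k)=\beta_1\theta^k+\beta_2\bar\theta^{\,k}$ for $1\le k\le q-2$, and use three distinct values of $\zeta^k$ to force the coefficients to vanish. The paper phrases this last step as a quadratic in $z=\rho^{k+1}$, $\rho=\theta/\bar\theta$, that vanishes at three points; your Vandermonde is equivalent. The flaw is in what you extract from that step. The expansion is
\[
 \widehat f(k)\overline{\widehat f(-k)}
 =\Norm(\beta_1)\,\bar\theta^{\,q-1}\zeta^{k}
 +\Norm(\beta_2)\,\theta^{q-1}\zeta^{-k}
 +\bigl(\beta_1\bar\beta_2\theta^{q-1}+\bar\beta_1\beta_2\bar\theta^{\,q-1}\bigr).
\]
So the coefficients of $\zeta^{\pm k}$ are the pure norm terms, and the cross term $\beta_1\bar\beta_2$ sits in the constant. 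Your own Vandermonde statement therefore gives $\Norm(\beta_1)=\Norm(\beta_2)=0$, i.e.\ $c_1=c_2=0$. The constant coefficient is then $0\ne1$, and the contradiction is immediate. Your expectation ``$\beta_1\bar\beta_2=0$, so one of $c_1,c_2$ vanishes'' misreads the expansion. The follow-up claim is also false as stated: a single surviving pole gives $\Norm(\beta_1)\bar\theta^{\,q-1}\zeta^k$, which is \emph{not} independent of $k$.

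The ``main obstacle'' you identify does not exist, and your plan for it would misfire. Since $\theta\notin\F_q$, we have $\zeta\ne1$. In characteristic two $\zeta^2=1$ means $(\zeta+1)^2=0$, so $\zeta^2\ne1$ as well. Equivalently, as the paper notes, $\zeta$ lies in the norm-one group of odd order $q+1$, so its order is odd and at least $3$. Hence $\zeta^{k_0},\zeta^{k_0+1},\zeta^{k_0+2}$ are always distinct and the Vandermonde never degenerates; order $3$ is harmless. You also could not ``rule it out by hand''. For every odd $r\ge3$, a primitive cube root of unity $\theta$ satisfies $\bar\theta=\theta^2=\theta+1$ and gives $\zeta$ of order $3$ (e.g.\ $q=8$), so these cases are neither excluded nor finitely many. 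Likewise, $q=4$ is excluded not by coincidences among powers of $\zeta$. It fails because there are only $q-2=2$ nonzero modes, fewer than the three evaluation points needed. That count, as you said earlier, is the only place $q\ge8$ enters.
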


\noindent\textit{Proof.}---For $1\le k\le q-2$, a finite geometric sum
gives
\begin{equation}
 \widehat{(t+\theta)^{-1}}(k)=\theta^k,
 \qquad
 \widehat{(t+\bar\theta)^{-1}}(k)=\bar\theta^{\,k}.
 \label{supp:eq-char-two-poles}
\end{equation}
For example,
$(t+\theta)^{-1}=(\bar\theta/\theta)
\sum_{j=0}^{q-2}\theta^{-j}t^j$ on $\F_q^\times$, from which the first
identity follows by character orthogonality; the second is its conjugate.
The constant term in Eq.~\eqref{supp:eq-char-two-kernel} has zero Fourier
transform on these modes, and hence
\begin{equation}
 \widehat f(k)=c_1\theta^k+c_2\bar\theta^{\,k}.
 \label{supp:eq-char-two-spectrum}
\end{equation}
Put $\rho=\theta/\bar\theta$.  It is a nontrivial norm-one element, so its
order is an odd divisor of $q+1$ and is at least $3$.  Since
$\theta^{q-1}=\rho^{-1}$ and
$\bar\theta^{\,q-1}=\rho$, direct multiplication yields
\begin{align}
 \widehat f(k)\overline{\widehat f(-k)}
 ={}&c_1\bar c_2\rho^{-1}+\bar c_1c_2\rho
 \nonumber\\
 &+\Norm(c_1)\rho^{k+1}+\Norm(c_2)\rho^{-(k+1)}.
 \label{supp:eq-char-two-pairing}
\end{align}
If this matrix condition held, the constant border would affect only
the zero mode, while every nonzero mode in
Eq.~\eqref{supp:eq-char-two-pairing} would have to equal $1$ (recall that
$-1=1$ in characteristic two).  After multiplication by
$z=\rho^{k+1}$, this condition says that one fixed quadratic polynomial
in $z$ vanishes for all $k=1,\ldots,q-2$.  The consecutive powers
$\rho^2,\ldots,\rho^{q-1}$ contain at least three distinct values when
$q\ge8$.  The quadratic is therefore identically zero, forcing
$\Norm(c_1)=\Norm(c_2)=0$, and hence $c_1=c_2=0$.  But then every nonzero
Fourier mode vanishes, a contradiction. \hfill$\square$

This lemma excludes only the direct extension of the displayed
rational kernel to characteristic two.  It does not exclude other bordered-circulant forms,
other stabilizer constructions, or arbitrary $\AME(2q,q)$ states in
characteristic two.

\section{Codes, stabilizer projectors, and AME states}
\label{supp:sec-codes}

Let
\begin{equation}
 G_q=[\,I_q\mid A_q\,],\qquad
 \mathcal C_q=\{\bm uG_q:\bm u\in\F_{q^2}^{q}\}\subseteq\F_{q^2}^{2q}.
 \label{supp:eq-code}
\end{equation}
By Eq.~\eqref{supp:eq-gram},
\begin{equation}
 G_q\overline{G_q}^{\,\mathsf T}
 =I_q+A_q\overline{A_q}^{\,\mathsf T}=0.
 \label{supp:eq-self-orthogonal}
\end{equation}

The identity block gives $G_q$ rank $q$ over $\F_{q^2}$.  Its row space
is therefore a Hermitian self-orthogonal subspace of half the ambient
dimension, so it equals its Hermitian dual.

The standard MDS generator criterion says that every set of $q$ columns
of $G_q$ must be independent \cite[Theorem~2.4.3]{HuffmanPless2003}; for a systematic
generator this is equivalent to superregularity of $A_q$
\cite{Keri2006}.  We give the determinant argument to fix the conventions.
To establish the minimum distance $d_{\min}$, first select any $q$ columns of $G_q$.
Suppose that $q-s$ come from $I_q$ and $s$ from $A_q$.  Denote the selected
column indices in $A_q$ by $J$, and let $R$ be the $s$ row indices not
occupied by the selected identity columns.  Expanding the determinant
along those $q-s$ identity columns leaves, up to sign,
$\det A_q[R,J]$.  It is nonzero by superregularity when $s>0$; for $s=0$
the selected matrix is $I_q$.  Thus every $q$ columns of $G_q$ are linearly
independent.

Now let $\bm uG_q$ be a nonzero codeword.  If it had at least $q$ zero
coordinates, the corresponding $q$ independent columns would form an
invertible matrix $H$ with $\bm uH=0$, forcing $\bm u=0$.
Consequently a nonzero word has at most $q-1$ zero coordinates and weight
at least $2q-(q-1)=q+1$.  The Singleton bound gives the reverse inequality,
$d_{\min}\le 2q-q+1=q+1$, so $\mathcal C_q$ is MDS with parameters

\begin{equation}
 [2q,q,q+1]_{q^2}.
 \label{supp:eq-classical-code}
\end{equation}
The three entries give its length, dimension over $\F_{q^2}$, and
minimum Hamming distance, respectively; the Hamming weight counts
nonzero coordinates.
The Hermitian stabilizer construction gives a pure quantum MDS code
$[[2q,0,q+1]]_q$ \cite{Rains1999,Ketkar2006}.  In the $k=0$ convention
used here, its distance is the minimum nonzero Hamming weight of the self-dual classical
stabilizer code $\mathcal C_q$, namely $q+1$.  Its one-dimensional code
space contains a unique normalized state up to global phase, which is a
stabilizer $\AME(2q,q)$ state \cite{HuberGrassl2020}.

For completeness, the state can be reconstructed directly from
$\mathcal C_q$.  Let $p$ be the characteristic of $\F_q$ and use the same
$\theta$ as in Eq.~\eqref{supp:eq-theta}.  Write every $v\in\F_{q^2}$ uniquely as
$v=x+\theta z$ with $x,z\in\F_q$.  On the
computational basis $\{\ket t:t\in\F_q\}$, for $x,z\in\F_q$ define
\begin{align}
 X(x)\ket t&=\ket{t+x},
 \label{supp:eq-X}\\
 Z(z)\ket t&=
 \exp\!\left[\frac{2\pi i}{p}
 \Tr_{\F_q/\F_p}(zt)\right]\ket t,
 \label{supp:eq-Z}\\
 D(x,z)&=
 \exp\!\left[\frac{2\pi i}{p}
 \Tr_{\F_q/\F_p}\!\left(\frac{xz}{2}\right)\right]X(x)Z(z).
 \label{supp:eq-Weyl}
\end{align}
Here $\Tr_{\F_q/\F_p}(a)=\sum_{\ell=0}^{e-1}a^{p^\ell}$ is the
field trace; the lowercase $\tr$ denotes the ordinary operator trace.
The field-trace value is read modulo $p$ in the complex exponential,
and $1/2$ denotes the inverse of $2$ in $\F_q$.
For codewords
$\bm c=(x_j+\theta z_j)_{j=1}^{2q}$ and
$\bm c'=(x'_j+\theta z'_j)_{j=1}^{2q}$ in $\mathcal C_q$, put
$D(\bm c)=\bigotimes_{j=1}^{2q}D(x_j,z_j)$.  The coefficient of $\theta$
in $c_j\bar c'_j$ is $z_jx'_j-x_jz'_j$.  Hermitian
self-orthogonality therefore makes the operators $D(\bm c)$ commute, and
the phase convention in Eq.~\eqref{supp:eq-Weyl} gives explicitly
\begin{equation}
 D(\bm c)D(\bm c')=D(\bm c+\bm c')
 \qquad(\bm c,\bm c'\in\mathcal C_q).
 \label{supp:eq-Weyl-addition}
\end{equation}
Thus $\{D(\bm c):\bm c\in\mathcal C_q\}$ is an abelian group of
$q^{2q}$ distinct unitary operators.  Its group average is a projector.
Only the identity has nonzero trace, equal to the Hilbert-space dimension
$q^{2q}$, so the average has trace one and hence rank one.  Therefore
\begin{equation}
 \ket{\Psi_{2q}}\!\bra{\Psi_{2q}}
 =\frac{1}{q^{2q}}\sum_{\bm c\in\mathcal C_q}D(\bm c).
 \label{supp:eq-projector}
\end{equation}
There are $|\mathcal C_q|=(q^2)^q=q^{2q}$ summands.  The classical minimum
distance $q+1$ means that every nonzero codeword, and hence every
nonidentity stabilizer in Eq.~\eqref{supp:eq-projector}, has support on at
least $q+1$ sites.
Tracing Eq.~\eqref{supp:eq-projector} down to any set of at most $q$ sites leaves only
the identity term, giving $\rho_S=I_S/q^{|S|}$ and proving the AME marginal condition.

The passage from an even-party AME state to an odd-party one by a local
rank-one projection is standard \cite{Helwig2012}; it is a consequence of
the even-party construction, not a second independent code construction.
As in the Letter, separate the first party from the others.  Its reduced
state is $I_q/q$, so the state has the expansion
\begin{equation}
 \ket{\Psi_{2q}}
 =\frac{1}{\sqrt q}\sum_{r\in\F_q}
 \ket{r}\otimes\ket{\Psi_{2q-1}^{(r)}},
 \label{supp:eq-one-site-projection}
\end{equation}
where the states $\ket{\Psi_{2q-1}^{(r)}}$ are normalized and mutually
orthogonal.  Measuring the first party in the computational basis gives
each outcome $r$ with probability $1/q$ and leaves the other parties in
$\ket{\Psi_{2q-1}^{(r)}}$.

Let $T$ contain at most $q-1$ remaining parties.  The original state
satisfies $\rho_{\{1\}\cup T}=(I_q\otimes I_T)/q^{|T|+1}$.
Its diagonal block corresponding to outcome $r$ is the unnormalized
conditional state $I_T/q^{|T|+1}$.  Dividing by the outcome probability
$1/q$ gives the normalized reduced state
\begin{equation}
 \rho_T^{(r)}
 =q\,\frac{I_T}{q^{|T|+1}}
 =\frac{I_T}{q^{|T|}}.
 \label{supp:eq-projected-marginal}
\end{equation}
Hence every outcome gives an $\AME(2q-1,q)$ state.  Any other measured
site is treated by relabeling the parties.
Computational-basis postselection is a stabilizer measurement, so these
states are stabilizer states as well.

\section{Explicit finite-field representatives}
\label{supp:sec-representatives}

Write $q=p^e$.  In each row of Table~\ref{supp:tab-fields}, choose a root
$g$ of the displayed monic polynomial $m_g(X)$ over $\F_p$ and set
$\F_{q^2}=\F_p(g)$.  The polynomial is the minimal polynomial of $g$,
and $g$ has multiplicative order $q^2-1$; thus every nonzero field element
has a unique expression $g^j$ with $0\le j<q^2-1$.  Multiplication adds
exponents modulo $q^2-1$, while addition uses the relation $m_g(g)=0$.
These two rules completely specify the field arithmetic, including every
entry of the matrices below.  The symbols $0$ and $1=g^0$ have their usual
meaning.  Conjugation is $z\mapsto z^q$.

For the universal representatives, $\gamma=g^{q+1}$, $\Norm(b)=2$, and
$a=-b/{\bar b}$.  The $q=29$ row fixes the field for the additional matrix
$M_{32,29}$; universal border parameters are not needed for that example.

\begin{table}[t]
\caption{Primitive elements and their minimal polynomials over $\F_p$.
All displayed field entries are powers of the generator $g$ belonging to
the relevant row.  These are explicit choices; the theorem holds for every
admissible parameter tuple.}
\label{supp:tab-fields}
\small
\begin{ruledtabular}
\begin{tabular}{ccc r ccc}
$q$ & $p$ & $m_g(X)$ & $\operatorname{ord}(g)$ & $\gamma$ & $b$ & $a$\\
\colrule
3 & 3 & $X^{2}+X+2$ & 8 & $g^{4}$ & $g$ & $g^{2}$\\
5 & 5 & $X^{2}+X+2$ & 24 & $g^{6}$ & $g$ & $g^{8}$\\
7 & 7 & $X^{2}+5X+5$ & 48 & $g^{8}$ & $g^{40}$ & $g^{24}$\\
9 & 3 & $X^{4}+X+2$ & 80 & $g^{10}$ & $g^{44}$ & $g^{8}$\\
11 & 11 & $X^{2}+3X+6$ & 120 & $g^{12}$ & $g^{69}$ & $g^{90}$\\
13 & 13 & $X^{2}+9X+2$ & 168 & $g^{14}$ & $g$ & $g^{72}$\\
17 & 17 & $X^{2}+11X+6$ & 288 & $g^{18}$ & $g^{18}$ & $g^{144}$\\
19 & 19 & $X^{2}+15X+2$ & 360 & $g^{20}$ & $g$ & $g^{162}$\\
23 & 23 & $X^{2}+21X+19$ & 528 & $g^{24}$ & $g^{72}$ & $g^{264}$\\
25 & 5 & $X^{4}+2X^{3}+3X^{2}+3X+2$ & 624 & $g^{26}$ & $g^{318}$ & $g^{168}$\\
27 & 3 & $X^{6}+X+2$ & 728 & $g^{28}$ & $g^{13}$ & $g^{26}$\\
29 & 29 & $X^{2}+21X+14$ & 840 & --- & --- & ---\\
\end{tabular}
\end{ruledtabular}
\end{table}

For compact notation, define the finite-field tuple
\[
 [k_1,\ldots,k_r]_g:=(g^{k_1},\ldots,g^{k_r}).
\]
Thus, for example, $[3,6]_g=(g^3,g^6)$.  The corresponding kernels
$\kappa=(\kappa_0,\ldots,\kappa_{q-2})$ are
{\small
\begingroup
\allowdisplaybreaks[1]
\begin{align}
q=3:\quad &[3,6]_g\nonumber\\
q=5:\quad &[23,8,4,21]_g\nonumber\\
q=7:\quad &[34,23,14,12,37,11]_g\nonumber\\
q=9:\quad &[33,18,39,26,64,17,5,52]_g\nonumber\\
q=11:\quad &[34,75,119,54,4,41,92,57,19,26]_g\nonumber\\
q=13:\quad &[145,122,81,107,82,100,92,21,139,87,90,102]_g\label{supp:eq-kernel-ledger}\\
q=17:\quad &[186,139,15,70,277,245,14,27,254,287,190,275,188,174,129,76]_g\nonumber\\
q=19:\quad &[44,297,75,7,169,303,173,216,62,191,30,119,185,108,78,72,254,126]_g\nonumber\\
q=23:\quad &[435,238,206,346,271,400,506,197,28,416,403,372,109,318,113,189,404,143,369,66,231,203]_g\nonumber\\
q=25:\quad &\bigl[32,489,259,436,55,274,97,328,186,381,295,395,220,509,542,564,244,387,219,86,169,336,527,496\bigr]_g\nonumber\\
q=27:\quad &\bigl[293,268,595,132,478,382,598,549,217,649,154,305,698,\nonumber
62,219,675,284,603,164,176,432,327,347,55,205,358\bigr]_g.\nonumber
\end{align}
\endgroup
}
Substitution in Eq.~\eqref{supp:eq-border} reconstructs each full matrix.
For example, when $q=3$, we choose $m_g(X)=X^2+X+2$ and obtain
\begin{equation}
 A_3=
 \begin{pmatrix}
g^{2}&g&g\\
g&g^{3}&g^{6}\\
g&g^{6}&g^{3}
 \end{pmatrix}.
 \label{supp:eq-A3}
\end{equation}
Here $g=1+\theta$, with $\theta^2=-1$ in $\F_9$, recovers the
example in the main text.  In characteristic three, $2\theta=-\theta$, so
$g^2=(1+\theta)^2=2\theta=-\theta$, $g^3=1-\theta$, and $g^6=\theta$.
The finite-field symbol $\theta$ is distinct from the complex imaginary unit $i$.
By Eq.~\eqref{supp:eq-code}, $A_3$ gives the generator
$G_3=[I_3\mid A_3]$, whose row space is the classical code $[6,3,4]_9$.
The Hermitian stabilizer construction then gives the quantum code
$[[6,0,4]]_3$ and its unique state $\AME(6,3)$.

As a corroborating check, exact arithmetic verified every square minor of
the displayed primitive-element matrices through $q=13$ and, for $q=17$,
every minor of order at most $8$ ($1{,}166{,}803{,}109$ determinants),
which suffices by the Jacobi identity below.  The numbers of
minors were
\begin{center}
\begin{tabular}{crrrrrr}
$q$&3&5&7&9&11&13\\ \hline
evaluated minors&19&251&3\,431&48\,619&705\,431&10\,400\,599
\end{tabular}
\end{center}
and no zero minor occurred.  The $q=17$ count is decomposed by order below.
Independently, all admissible $(\alpha,w)$
pairs were checked through $q=13$: respectively $4,12,24,40,60,84$
matrices.

There is no loss in holding $\omega$ and $b$ fixed in this check.
Replacing $\omega$ by another primitive element of $\F_q^\times$ merely
permutes the finite row and column labels.  If $b'=cb$ with
$\Norm(c)=1$ and $D_c=\operatorname{diag}(c,1,\ldots,1)$, then
\begin{equation}
 A_q(\alpha,w,\omega,b')
 =D_cA_q(\alpha,w,\omega,b)D_c.
 \label{supp:eq-b-change}
\end{equation}
Both operations multiply each square minor by a nonzero factor and
therefore preserve whether it vanishes.

For $q=17$ we checked all minors up to order $8$ and used Jacobi's
identity for the remaining orders of the
primitive-element matrix $A_{17}=B_{16}(g^{144},g^{18};\kappa)$ listed above.  From the
condition $A_{17}\overline{A_{17}}^{\,\mathsf T}=-I_{17}$,
$A_{17}^{-1}=-\overline{A_{17}}^{\,\mathsf T}$ and $\det A_{17}\ne0$.
Let $I$ and $J$ be equally large sets of row and column indices,
respectively, and let $A_{17}[I,J]$ be the corresponding submatrix.
Write $I^c,J^c$ for the complements in the full row and column index sets.
Jacobi's complementary-minor identity, followed by
$A_{17}^{-1}=-\overline{A_{17}}^{\,\mathsf T}$, gives
\begin{align}
 \det A_{17}[I,J]
 =\epsilon_{I,J}\det(A_{17})
   \det\!\bigl(A_{17}^{-1}[J^c,I^c]\bigr)\nonumber
 =\epsilon'_{I,J}\det(A_{17})\,
   \overline{\det A_{17}[I^c,J^c]},
\end{align}
where $\epsilon_{I,J},\epsilon'_{I,J}\in\{1,-1\}$ depend only on the
index sets (and absorb the factor from the minus sign in
$A_{17}^{-1}$).
Thus orders at most $\lfloor17/2\rfloor=8$ suffice.  Exact arithmetic found
no singular minor among
\[
 \sum_{s=1}^{6}\binom{17}{s}^{2}=197{,}602{,}305,\qquad
 \binom{17}{7}^{2}=378{,}224{,}704,\qquad
 \binom{17}{8}^{2}=590{,}976{,}100
\]
evaluated determinants.  The order-$8$ traversal consisted of two disjoint
segments containing $320{,}892{,}000$ and $270{,}084{,}100$ determinants.
In total, $1{,}166{,}803{,}109$ minors were evaluated exactly, with no zero;
Jacobi then covers orders $9$ through $16$, while order $17$ is nonzero by
Eq.~\eqref{supp:eq-gram}.  This completes the exact check of $A_{17}$
and confirms $\AME(34,17)$ and $\AME(33,17)$.
For a linear code $\mathcal C\subseteq\F_{q^2}^{n}$, its Schur square is
\begin{equation}
 \mathcal C^{\star 2}
 =\operatorname{span}_{\F_{q^2}}
 \{\bm c\star\bm c':\bm c,\bm c'\in\mathcal C\},
 \qquad
 (\bm c\star\bm c')_j=c_jc'_j.
 \label{supp:eq-schur-square}
\end{equation}
Thus it is the span of all componentwise products of codewords.
For an $[n,k]_{q^2}$ GRS code, including its extended form,
$\dim\mathcal C^{\star 2}=\min(n,2k-1)$: products of the evaluation
polynomials span all polynomials of degree at most $2k-2$, with the
coordinate multipliers squared; the extended coordinate is the
corresponding leading coefficient.  Our computed dimension is $34$ for
the code generated by $[I_{17}\mid A_{17}]$, whereas a GRS code with
these parameters would have dimension $33$.  This provides an independent
check that this particular code is not GRS.
Additional checks of all minors through order $4$ found no zero among
$165{,}403{,}125$ minors for $q=25$ and $316{,}682{,}055$ minors for
$q=27$; the corresponding Schur-square dimensions were $50$ and $54$.
These computations only check conventions; Theorem~\ref{supp:thm-universal}
is analytic and does not rely on them.

\section{Nine explicit parent matrices}
\label{supp:sec-finite-certificates}

We now record the nine finite parent matrices summarized in
Table~I of the main text.  The symbol $A_q$ denotes the $q\times q$
matrix from the universal formula, whereas $M_{N,q}$ denotes a displayed
matrix for an $N$-party state of local dimension $q$.
Here $N$ is even, $m=N/2$ is the matrix order, and
$G_{N,q}=[I_m\mid M_{N,q}]$ is the systematic generator.
Thus $M_{2q,q}$ and $A_q$ have the same size and yield the same code
parameters, but the notation does not assert that their entries agree.
Every field element below is written as a power of the primitive element
$g$ specified for that $q$ in Table~\ref{supp:tab-fields}, including
$q=29$.  Only $M_{34,17}$ is identified with the universal matrix;
the other finite matrices are specified by their displayed border and
circulant entries, even when some border values coincide with the table.
Thus the notation fixes both multiplication and
addition without any numerical encoding.
The five parents outside the original $N=2q$ family were found by searching matrices of the
bordered-circulant form in Eq.~\eqref{supp:eq-border}, except that
$M_{20,13}$ is an unbordered circulant.  We make no claim that these
searches exhaust all matrices.  Their verification uses the displayed entries,
Hermitian orthogonality, and the minor checks below.  The compact forms are
\begingroup
\allowdisplaybreaks[1]
\begin{align}
M_{16,9}&=B_{7}\bigl(1,1;[29,21,37,2,40,57,76]_g\bigr),
\label{supp:eq-cert-169}\\
M_{18,9}&=B_{8}\bigl(g^{64},g^{44};[68,37,3,35,6,79,30,72]_g\bigr),
\label{supp:eq-cert-189}\\
M_{22,11}&=B_{10}\bigl(g^{90},g^{69};[46,109,91,89,105,44,27,14,114,52]_g\bigr),
\label{supp:eq-cert-2211}\\
M_{20,13}&=\operatorname{Circ}_{10}\bigl([122,55,47,136,15,77,24,35,101,55]_g\bigr),
\label{supp:eq-cert-2013}\\
M_{26,13}&=B_{12}\bigl(g^{72},g;[77,26,142,1,106,33,160,31,108,23,125,46]_g\bigr),
\label{supp:eq-cert-2613}\\
M_{20,17}&=B_{9}\bigl(g^{227},g^{198};[7,68,31,126,119,239,147,106,114]_g\bigr),
\label{supp:eq-cert-2017}\\
M_{34,17}&=B_{16}\bigl(g^{144},g^{18};[186,139,15,70,277,245,14,27,254,287,190,275,188,174,129,76]_g\bigr),
\label{supp:eq-cert-3417}\\
M_{28,25}&=B_{13}\bigl(g^{312},1;[156,269,614,622,597,97,369,489,553,573,574,374,485]_g\bigr),
\label{supp:eq-cert-2825}\\
M_{32,29}&=B_{15}\bigl(g^{150},1;[120,51,445,446,32,493,181,370,650,209,17,88,334,305,639]_g\bigr).
\label{supp:eq-cert-3229}
\end{align}
\endgroup

The matrices $M_{18,9}$, $M_{22,11}$, and $M_{26,13}$ are separately
found representatives of parameter points already covered by the universal
$N=2q$ theorem, whereas $M_{34,17}=A_{17}$ is the matrix of
Eqs.~\eqref{supp:eq-a}--\eqref{supp:eq-core} itself at the $q=17$
representatives of Table~\ref{supp:tab-fields}; thus, for $q=17$, the
theorem and the exact computation verify the same matrix.
The circle family of Sec.~\ref{supp:sec-circle-parent} also covers the
parameters of $M_{20,17}$, $M_{28,25}$, and $M_{32,29}$.
Thus seven of these nine parameter points are covered by the general
families, with only $M_{16,9}$ and $M_{20,13}$ outside their ranges.
Parameter coverage does not identify the displayed matrices with the
general formulas or establish monomial or local-Clifford equivalence.

Exact finite-field arithmetic gives
\begin{equation}
 M_{N,q}\overline{M_{N,q}}^{\,\mathsf T}=-I_m
 \label{supp:eq-finite-gram}
\end{equation}
and verifies that every nonempty square minor is nonzero.  For
$M_{34,17}$, the direct computation covers every minor of order at most
$8$, while Jacobi's identity covers the complementary orders.
Table~\ref{supp:tab-finite-audit} lists the numbers of evaluated minors.
\begin{table}[t]
\caption{Exact verification of the explicit matrices.  The fourth column gives the number of
evaluated determinants.  For every row except $M_{34,17}$ this is
the number of all nonempty square minors.  For $M_{34,17}$, the dagger marks
all $\sum_{s=1}^{8}\binom{17}{s}^{2}=1{,}166{,}803{,}109$ minors of order
at most $8$.  Since
$M_{34,17}^{-1}=-\overline{M_{34,17}}^{\,\mathsf T}$, Jacobi's
complementary-minor identity pairs orders $s$ and $17-s$, so these orders
decide superregularity.}
\label{supp:tab-finite-audit}
\begin{ruledtabular}
\begin{tabular}{ccccc}
$q$ & $N$ & classical code & evaluated minors & projected state \\
\colrule
9  & 16 & $[16,8,9]_{81}$    & $12{,}869$       & $\AME(15,9)$ \\
9  & 18 & $[18,9,10]_{81}$   & $48{,}619$       & $\AME(17,9)$ \\
11 & 22 & $[22,11,12]_{121}$ & $705{,}431$      & $\AME(21,11)$ \\
13 & 20 & $[20,10,11]_{169}$ & $184{,}755$      & $\AME(19,13)$ \\
13 & 26 & $[26,13,14]_{169}$ & $10{,}400{,}599$ & $\AME(25,13)$ \\
17 & 20 & $[20,10,11]_{289}$ & $184{,}755$      & $\AME(19,17)$ \\
17 & 34 & $[34,17,18]_{289}$ & $1{,}166{,}803{,}109^{\dagger}$
                                              & $\AME(33,17)$ \\
25 & 28 & $[28,14,15]_{625}$ & $40{,}116{,}599$ & $\AME(27,25)$ \\
29 & 32 & $[32,16,17]_{841}$ & $601{,}080{,}389$& $\AME(31,29)$ \\
\end{tabular}
\end{ruledtabular}
\end{table}
For each row, Eq.~\eqref{supp:eq-finite-gram} and superregularity make
the row space of $G_{N,q}$ a Hermitian self-dual MDS code
$[2m,m,m+1]_{q^2}$.  Its
one-dimensional stabilizer space contains a unique normalized
$\AME(2m,q)$ state up to global phase.  Repeating the normalized
one-site argument of Eqs.~\eqref{supp:eq-one-site-projection} and
\eqref{supp:eq-projected-marginal}, now with $2m$ parties and unchanged
local dimension $q$, gives the projected state in the last column.
The outcome probability remains $1/q$, and the surviving subsets have
size at most $m-1$.  No additional matrix is
required for a descendant.

For direct reproduction, the eight non-$M_{34,17}$ matrices are displayed
below as integer exponent arrays $E_{N,q}$, defined entrywise by
\[
 [M_{N,q}]_{ij}=g^{[E_{N,q}]_{ij}}.
\]
All entries of these field matrices are nonzero: an exponent $0$ means
$g^0=1$, not the zero field element.  In particular, the first row and
column of $E_{16,9}$ are zero.  The exponent arrays are not themselves
field matrices.  The matrix $M_{34,17}$ is expanded directly from its
compact form in Eq.~\eqref{supp:eq-cert-3417} by
Eq.~\eqref{supp:eq-border}; no additional data are needed.  The direction
of every cyclic shift is fixed by
$\operatorname{Circ}_{\mu}(\kappa)_{ij}
=\kappa_{(j-i)\bmod\mu}$.

\begingroup
\small
\setlength{\arraycolsep}{2pt}
\renewcommand{\arraystretch}{1.05}
\noindent
\begin{minipage}[t]{0.48\linewidth}
\centering
\[
E_{16,9}=\left(\begin{array}{@{}*{8}{c}@{}}
0&0&0&0&0&0&0&0\\
0&29&21&37&2&40&57&76\\
0&76&29&21&37&2&40&57\\
0&57&76&29&21&37&2&40\\
0&40&57&76&29&21&37&2\\
0&2&40&57&76&29&21&37\\
0&37&2&40&57&76&29&21\\
0&21&37&2&40&57&76&29
\end{array}\right)
\]
\end{minipage}\hfill
\begin{minipage}[t]{0.48\linewidth}
\centering
\[
E_{18,9}=\left(\begin{array}{@{}*{9}{c}@{}}
64&44&44&44&44&44&44&44&44\\
44&68&37&3&35&6&79&30&72\\
44&72&68&37&3&35&6&79&30\\
44&30&72&68&37&3&35&6&79\\
44&79&30&72&68&37&3&35&6\\
44&6&79&30&72&68&37&3&35\\
44&35&6&79&30&72&68&37&3\\
44&3&35&6&79&30&72&68&37\\
44&37&3&35&6&79&30&72&68
\end{array}\right)
\]
\end{minipage}
\par\medskip

\[
E_{22,11}=\left(\begin{array}{@{}*{11}{c}@{}}
90&69&69&69&69&69&69&69&69&69&69\\
69&46&109&91&89&105&44&27&14&114&52\\
69&52&46&109&91&89&105&44&27&14&114\\
69&114&52&46&109&91&89&105&44&27&14\\
69&14&114&52&46&109&91&89&105&44&27\\
69&27&14&114&52&46&109&91&89&105&44\\
69&44&27&14&114&52&46&109&91&89&105\\
69&105&44&27&14&114&52&46&109&91&89\\
69&89&105&44&27&14&114&52&46&109&91\\
69&91&89&105&44&27&14&114&52&46&109\\
69&109&91&89&105&44&27&14&114&52&46
\end{array}\right).
\]

\[
E_{20,13}=\left(\begin{array}{@{}*{10}{c}@{}}
122&55&47&136&15&77&24&35&101&55\\
55&122&55&47&136&15&77&24&35&101\\
101&55&122&55&47&136&15&77&24&35\\
35&101&55&122&55&47&136&15&77&24\\
24&35&101&55&122&55&47&136&15&77\\
77&24&35&101&55&122&55&47&136&15\\
15&77&24&35&101&55&122&55&47&136\\
136&15&77&24&35&101&55&122&55&47\\
47&136&15&77&24&35&101&55&122&55\\
55&47&136&15&77&24&35&101&55&122
\end{array}\right).
\]

\[
E_{26,13}=\left(\begin{array}{@{}*{13}{c}@{}}
72&1&1&1&1&1&1&1&1&1&1&1&1\\
1&77&26&142&1&106&33&160&31&108&23&125&46\\
1&46&77&26&142&1&106&33&160&31&108&23&125\\
1&125&46&77&26&142&1&106&33&160&31&108&23\\
1&23&125&46&77&26&142&1&106&33&160&31&108\\
1&108&23&125&46&77&26&142&1&106&33&160&31\\
1&31&108&23&125&46&77&26&142&1&106&33&160\\
1&160&31&108&23&125&46&77&26&142&1&106&33\\
1&33&160&31&108&23&125&46&77&26&142&1&106\\
1&106&33&160&31&108&23&125&46&77&26&142&1\\
1&1&106&33&160&31&108&23&125&46&77&26&142\\
1&142&1&106&33&160&31&108&23&125&46&77&26\\
1&26&142&1&106&33&160&31&108&23&125&46&77
\end{array}\right).
\]

\[
E_{20,17}=\left(\begin{array}{@{}*{10}{c}@{}}
227&198&198&198&198&198&198&198&198&198\\
198&7&68&31&126&119&239&147&106&114\\
198&114&7&68&31&126&119&239&147&106\\
198&106&114&7&68&31&126&119&239&147\\
198&147&106&114&7&68&31&126&119&239\\
198&239&147&106&114&7&68&31&126&119\\
198&119&239&147&106&114&7&68&31&126\\
198&126&119&239&147&106&114&7&68&31\\
198&31&126&119&239&147&106&114&7&68\\
198&68&31&126&119&239&147&106&114&7
\end{array}\right).
\]

\[
E_{28,25}=\left(\begin{array}{@{}*{14}{c}@{}}
312&0&0&0&0&0&0&0&0&0&0&0&0&0\\
0&156&269&614&622&597&97&369&489&553&573&574&374&485\\
0&485&156&269&614&622&597&97&369&489&553&573&574&374\\
0&374&485&156&269&614&622&597&97&369&489&553&573&574\\
0&574&374&485&156&269&614&622&597&97&369&489&553&573\\
0&573&574&374&485&156&269&614&622&597&97&369&489&553\\
0&553&573&574&374&485&156&269&614&622&597&97&369&489\\
0&489&553&573&574&374&485&156&269&614&622&597&97&369\\
0&369&489&553&573&574&374&485&156&269&614&622&597&97\\
0&97&369&489&553&573&574&374&485&156&269&614&622&597\\
0&597&97&369&489&553&573&574&374&485&156&269&614&622\\
0&622&597&97&369&489&553&573&574&374&485&156&269&614\\
0&614&622&597&97&369&489&553&573&574&374&485&156&269\\
0&269&614&622&597&97&369&489&553&573&574&374&485&156
\end{array}\right).
\]

\[
E_{32,29}=\left(\begin{array}{@{}*{16}{c}@{}}
150&0&0&0&0&0&0&0&0&0&0&0&0&0&0&0\\
0&120&51&445&446&32&493&181&370&650&209&17&88&334&305&639\\
0&639&120&51&445&446&32&493&181&370&650&209&17&88&334&305\\
0&305&639&120&51&445&446&32&493&181&370&650&209&17&88&334\\
0&334&305&639&120&51&445&446&32&493&181&370&650&209&17&88\\
0&88&334&305&639&120&51&445&446&32&493&181&370&650&209&17\\
0&17&88&334&305&639&120&51&445&446&32&493&181&370&650&209\\
0&209&17&88&334&305&639&120&51&445&446&32&493&181&370&650\\
0&650&209&17&88&334&305&639&120&51&445&446&32&493&181&370\\
0&370&650&209&17&88&334&305&639&120&51&445&446&32&493&181\\
0&181&370&650&209&17&88&334&305&639&120&51&445&446&32&493\\
0&493&181&370&650&209&17&88&334&305&639&120&51&445&446&32\\
0&32&493&181&370&650&209&17&88&334&305&639&120&51&445&446\\
0&446&32&493&181&370&650&209&17&88&334&305&639&120&51&445\\
0&445&446&32&493&181&370&650&209&17&88&334&305&639&120&51\\
0&51&445&446&32&493&181&370&650&209&17&88&334&305&639&120
\end{array}\right).
\]

\endgroup

\section{Weighted normalization and split-torus descent}
\label{supp:sec-split-descent}

We keep the Letter's notation $\overline P^{\,\mathsf T}$ for the
finite-field Hermitian transpose, with conjugation as in
Eq.~\eqref{supp:eq-conjugation}.
Submatrices inherit superregularity, but not Hermitian orthogonality;
the following weights restore the latter for specified subsets.

\begin{lemma}[Norm normalization]
\label{supp:lem-weight-normalization}
If a superregular matrix $P\in\F_{q^2}^{m\times m}$ satisfies
$PW\overline P^{\,\mathsf T}=H$, where
$W=\operatorname{diag}(\lambda_j)$ and $H=\operatorname{diag}(h_i)$
with every $\lambda_j,h_i\in\F_q^\times$, choose diagonal matrices
$D,L$ with $\Norm(D_{jj})=\lambda_j$ and
$\Norm(L_{ii})=-h_i^{-1}$.  Then $B=LPD$ is superregular and
$B\overline B^{\,\mathsf T}=-I_m$.
\end{lemma}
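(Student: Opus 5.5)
The plan is a direct computation in two parts: superregularity of $B$ by scaling minors, and the identity $B\overline B^{\,\mathsf T}=-I_m$ by expanding the product. The diagonal entries $D_{jj}$ and $L_{ii}$ exist because the norm map $\F_{q^2}^\times\to\F_q^\times$ is surjective, as already used for $w$ and $b$ in Eq.~\eqref{supp:eq-parameters}. Since $\lambda_j\ne0$ and $h_i\ne0$, these entries are nonzero, so $L$ and $D$ are invertible diagonal matrices.

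\emph{Superregularity.} For row and column sets $R,C$ with $|R|=|C|$, the submatrix of $B$ factors as $B[R,C]=L[R,R]\,P[R,C]\,D[C,C]$. Hence
\[
\det B[R,C]=\Bigl(\prod_{i\in R}L_{ii}\Bigr)\Bigl(\prod_{j\in C}D_{jj}\Bigr)\det P[R,C],
\]
which is nonzero by superregularity of $P$. This is the same reasoning used after Eq.~\eqref{supp:eq-factorization} and in Eq.~\eqref{supp:eq-b-change}.

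\emph{Gram identity.} Expand
\[
B\overline B^{\,\mathsf T}=L\,P\,(D\overline D)\,\overline P^{\,\mathsf T}\,\overline L .
\]
This uses that conjugation commutes with products and that diagonal matrices are symmetric, so $\overline{LPD}^{\,\mathsf T}=\overline D\,\overline P^{\,\mathsf T}\overline L$. Because $D$ is diagonal, $D\overline D=\operatorname{diag}(\Norm(D_{jj}))=W$. The hypothesis then collapses the middle factor:
\[
B\overline B^{\,\mathsf T}=LH\overline L=\operatorname{diag}\bigl(\Norm(L_{ii})\,h_i\bigr).
\]
Each diagonal entry is $\Norm(L_{ii})\,h_i=(-h_i^{-1})h_i=-1$, which gives $B\overline B^{\,\mathsf T}=-I_m$.

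There is no real obstacle. The only point needing care is the order of the factors in the conjugate transpose, so that $D\overline D$ lands between $P$ and $\overline P^{\,\mathsf T}$ and matches $W$ exactly. The entries of $W$ and $H$ lie in $\F_q$, the fixed field of the bar, so they are consistent with $\Norm$ values, which also lie in $\F_q$.
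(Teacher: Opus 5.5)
Your proposal is correct and follows the paper's proof: existence of $D,L$ from surjectivity of the norm, the expansion $B\overline B^{\,\mathsf T}=LPW\overline P^{\,\mathsf T}\overline L=LH\overline L=-I_m$, and superregularity because each minor is rescaled by nonzero diagonal factors. You also spell out the factor ordering and why $D\overline D=W$, which the paper leaves implicit.
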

\noindent\textit{Proof.}---The norm map is surjective, so these nonzero
entries exist.  Multiplication gives
$B\overline B^{\,\mathsf T}=L\operatorname{diag}(h_i)\overline L^{\,\mathsf T}=-I_m$;
each minor is multiplied by nonzero row and column factors.
\hfill$\square$

For any such $B$, Sec.~\ref{supp:sec-codes} applies with matrix order
$m$: $[I_m\mid B]$ generates a Hermitian self-dual
$[2m,m,m+1]_{q^2}$ MDS code.  With exactly the Weyl convention of
Eqs.~\eqref{supp:eq-X}--\eqref{supp:eq-Weyl}, now taking the tensor product
over $2m$ sites of unchanged local dimension $q$, its explicit state recipe is
\begin{equation}
 P_B=q^{-2m}\sum_{\bm u\in\F_{q^2}^{m}}D(\bm u[I_m\mid B]),
 \qquad
 \ket{\Psi_B}=
 \frac{P_B\ket{\bm t_0}}
 {\sqrt{\bra{\bm t_0}P_B\ket{\bm t_0}}},
 \quad \bra{\bm t_0}P_B\ket{\bm t_0}>0.
 \label{supp:eq-general-projector}
\end{equation}
The same addition and trace argument proves that $P_B$ is rank one and
$\ket{\Psi_B}$ is stabilizer $\AME(2m,q)$.  Here
$\bm t_0\in\F_q^{2m}$ labels a computational basis vector; a choice with
the displayed positive overlap exists because $\tr P_B=1$.
Measuring any one party in the computational basis gives each outcome
with probability $1/q$ and leaves a normalized stabilizer
$\AME(2m-1,q)$ state.  The argument in
Eqs.~\eqref{supp:eq-one-site-projection} and
\eqref{supp:eq-projected-marginal} applies with $|T|\le m-1$ and
unchanged local dimension $q$.  All odd-party assertions below use
this projection.

\begin{theorem}[Split descent]
\label{supp:thm-split-descent}
Let $q$ be an odd prime power.  For $1\le k\le(q-1)/2$, choose
$x_1,\ldots,x_k\in\F_q^\times$ with
distinct squares $u_i=x_i^2$.  The submatrix
\[
 P=A_q[\{0,\pm x_1,\ldots,\pm x_k\},
       \{\infty,\pm x_1,\ldots,\pm x_k\}]
\]
has a diagonal normalization $B$ with $B\overline B^{\,\mathsf T}=-I_{2k+1}$.
Consequently there are Hermitian self-dual
$[4k+2,2k+1,2k+2]_{q^2}$ MDS codes and stabilizer
$\AME(4k+2,q)$ and $\AME(4k+1,q)$ states.
\end{theorem}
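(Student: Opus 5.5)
The plan is to find explicit diagonal weights $W=\operatorname{diag}(\lambda_\infty,\lambda_{\pm x_1},\ldots,\lambda_{\pm x_k})$ over $\F_q^\times$ for which $PW\overline P^{\,\mathsf T}$ is diagonal with nonzero entries. Lemma~\ref{supp:lem-weight-normalization} then produces $B=LPD$. Superregularity of $P$ is automatic, since it is a submatrix of $A_q$ and Theorem~\ref{supp:thm-universal} applies. The code and state claims follow verbatim from Sec.~\ref{supp:sec-codes} and the recipe \eqref{supp:eq-general-projector} with $m=2k+1$, including the one-party projection to $\AME(4k+1,q)$. So everything reduces to the weighted Gram identity.

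To compute the Gram matrix, I would replace $A_q$ by the factorization \eqref{supp:eq-factorization}, i.e. work with $L$, whose row functionals $\Lambda_x$ evaluate at $\pm\theta x$ and whose columns are the Cauchy kernels $(u-y)^{-1}$ together with the constant $-1$. The column diagonal scalings $2y$ and $-b$ lie in $\F_{q^2}$, and they contribute $\Norm(2y)=4y^2$ and $\Norm(b)=2$, both in $\F_q^\times$, so they can be absorbed into $W$. Row scalings only rescale $H$. Put $\mathcal D(u)=\prod_{i=1}^k(u^2-u_i)=\prod_{y\in S}(u-y)$, where $S=\{\pm x_1,\ldots,\pm x_k\}$; this polynomial lies in $\F_q[u]$. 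For $y\in S$, the natural weight is $\lambda_y\propto 1/\mathcal D'(y)$, which lies in $\F_q^\times$. With this choice, each sum $\sum_{y\in S}\lambda_y\,(a-y)^{-1}(c-y)^{-1}$, taken at the conjugate points $a,c\in\{\pm\theta x,\pm\theta x'\}$ (using $\bar\theta=-\theta$), is evaluated by the residue theorem (partial fractions of $1/((u-a)(u-c)\mathcal D(u))$). The result is a combination of $1/\mathcal D(a)$ and $1/\mathcal D(c)$ divided by $(a-c)$, plus a polynomial-part correction when degrees are small. That correction is exactly what the border column $\infty$, with a suitable $\lambda_\infty\in\F_q$, is designed to cancel. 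The border row $0$ evaluates at $u=0$, and it is handled by the same residue identity with $a=0$ or $c=0$.

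The key steps, in order, are as follows. (i) Write every entry of $PW\overline P^{\,\mathsf T}$ as $\eta$-weighted combinations of such residue sums. (ii) Use $\mathcal D(-\theta x)=\mathcal D(\theta x)=\overline{\mathcal D(\theta x)}$, which holds because $\mathcal D$ is even with coefficients in $\F_q$; this shows $\mathcal D(\theta x)\in\F_q^\times$ for $x\in S$, nonzero because $\alpha x^2$ is a nonsquare. (iii) Check that for $x\ne x'$ the $\eta$-combinations cancel, using $\eta_+-\eta_-=1$, $\eta_++\eta_-=w/\theta$ and $\Norm(\eta_-/\eta_+)=-1$. The case $x'=-x$ is the delicate one, because there the points $\theta x'$ and $-\theta x$ coincide and the residue sum degenerates into a derivative term. (iv) Fix $\lambda_\infty$ and the corner weight so that the row-$0$ versus row-$x$ entries vanish. (v) Verify that the diagonal entries $h_x$ and $h_0$ are nonzero; they lie in $\F_q$ automatically since $W$ is over $\F_q$.

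I expect step (iii) for the pair $\{x,-x\}$, jointly with the border bookkeeping of step (iv), to be the main obstacle. My first attempt there is to take $\lambda_y=c/(y^2\mathcal D'(y))$, which absorbs the $\Norm(2y)$ factor. I would then sanity-check that at $k=(q-1)/2$ this weighting reduces to the uniform one from Lemma~\ref{supp:lem-fourier}, since $\mathcal D(u)=u^{q-1}-1$ and $y\mathcal D'(y)=-1$ there. That check fixes the constants. If the degenerate pair term does not cancel directly, I would instead allow $\lambda_y$ to carry an extra factor $y$ times an odd normalization, exploiting $\mathcal D'(-y)=-\mathcal D'(y)$, and re-solve for the two free constants $c$ and $\lambda_\infty$.
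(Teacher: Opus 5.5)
There is a genuine gap, and it lies in the weights, which carry the whole content of the theorem. Your architecture matches the paper's: diagonal weights over $\F_q$, the norm-normalization lemma, superregularity inherited from $A_q$, then the projector recipe with $m=2k+1$. Checking $PW\overline P^{\,\mathsf T}$ by residue calculus on $A_q=\Delta_{\mathrm{row}}L\Delta_{\mathrm{col}}$ is also a sound method.

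Your first ansatz $\lambda_y=c/(y^2\mathcal D'(y))$ is odd in $y$. It already fails your own check, since it gives $\lambda_y=-c/y$ at $k=(q-1)/2$.

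The fallback $\lambda_y=c/(y\mathcal D'(y))$, i.e.\ $W'_y:=\Norm(2y)\lambda_y=4cy/\mathcal D'(y)$, fails for general $k\ge2$:
\begin{itemize}
\item The integrand $u/((u-a)(u-a')\mathcal D(u))$ has degree at most $-3$. So nothing comes from infinity for the border column to cancel, and the residues at the evaluation points survive: $\sum_{y\in S}W'_y/((a-y)(a'-y))=\frac{4c}{a-a'}\bigl(\frac{a'}{\mathcal D(a')}-\frac{a}{\mathcal D(a)}\bigr)$.
\item Assemble the four $\eta$-terms using $\Norm(\eta_+)+\Norm(\eta_-)=0$ and $\eta_+\bar\eta_-+\eta_-\bar\eta_+=-1$. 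For $x^2\ne z^2$ the $(x,z)$ entry becomes $4c(\phi_z-\phi_x)/(x-z)+2\lambda_\infty$, where $\phi_x=x/Q(\alpha x^2)$ and $Q(T)=\prod_i(T-u_i)$.
\item Making both the $(x,z)$ and $(x,-z)$ entries vanish forces $Q(\alpha x^2)=Q(\alpha z^2)$. This holds at $k=(q-1)/2$, where both values are $-2$, which is why your sanity check cannot see the problem. It fails in general: $q=7$, $\alpha=3$, $\{u_1,u_2\}=\{1,2\}$ gives $2\ne 6$.
\item Re-solving for $c$ and $\lambda_\infty$ cannot help, because admissible weights are unique up to a scalar. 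If $PW_j\overline P^{\,\mathsf T}=H_j$ for $j=1,2$, then $PW_2W_1^{-1}=H_2H_1^{-1}P$. Since $P$ has no zero entries, $W_2W_1^{-1}$ must be scalar.
\end{itemize}

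The missing idea is a weight factor that vanishes at every evaluation point $\pm\theta x_i$. Let $R(T)=\prod_i(T-\alpha u_i)$. The paper takes $\lambda_\infty=1$ and $\lambda_{\pm x_i}=-R(u_i)/(4u_iQ'(u_i))$. In your notation this is $\lambda_y=-R(y^2)/(2y\mathcal D'(y))$, i.e.\ $W'_y=-2yR(y^2)/\mathcal D'(y)$. With these weights:
\begin{itemize}
\item The integrand $uR(u^2)/((u-a)(u-a')\mathcal D(u))$ has degree exactly $-1$. For $a\ne a'$ the residues at $a,a'$ vanish because $R(a^2)=R(a'^2)=0$. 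Every such sum therefore equals $-2$, the coefficient of $u^{-1}$ at infinity.
\item This constant is exactly what the border column, with weight $\Norm(b)\lambda_\infty=2$, cancels. The same holds for row $0$, where the factor $u$ removes the pole at $u=0$.
\item For $a=a'$ the double pole adds $4\alpha x^2R'(\alpha x^2)/Q(\alpha x^2)$. This cancels in the $(x,-x)$ entry because $\Norm(\eta_+)+\Norm(\eta_-)=0$, and it produces the diagonal entry $h_{\pm x}$.
\item The $(0,0)$ entry picks up $2R(0)/Q(0)=2\alpha^k$, which gives $h_0=\alpha^k$ after the row factor $\Norm(b/2)=1/2$.
\end{itemize}
The paper packages the same fact as the interpolant $\Phi(T)=\sum_i\lambda_iu_i/(T-u_i)=\frac14\bigl(1-R(T)/Q(T)\bigr)$ with $\Phi(\alpha u_i)=\frac14$. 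With these weights your residue route gives a complete proof; without the factor $R(y^2)$ it cannot.
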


\noindent\textit{Proof.}---Use the same $\alpha,w,b$ as for $A_q$ and put
\begin{equation}
 Q(T)=\prod_i(T-u_i),\quad R(T)=\prod_i(T-\alpha u_i),\quad
 \lambda_\infty=1,\quad
 \lambda_{x_i}=\lambda_{-x_i}=\lambda_i
 =-\frac{R(u_i)}{4u_iQ'(u_i)}.
 \label{supp:eq-split-weights}
\end{equation}
These weights are nonzero in $\F_q$: the $u_i$ are distinct nonzero
squares, whereas the $\alpha u_i$ are nonsquares.  Partial fractions give
\begin{equation}
 \Phi(T)=\sum_i\frac{\lambda_i u_i}{T-u_i}
 =\frac14\left(1-\frac{R(T)}{Q(T)}\right),
 \qquad \Phi(\alpha u_i)=\frac14.
 \label{supp:eq-split-interpolation}
\end{equation}
For $u=t^2$, Eq.~\eqref{supp:eq-core} gives the two pair identities
\begin{align*}
 \overline{[A_q]_{x,t}}+\overline{[A_q]_{x,-t}}
 &=\frac{4u}{\alpha x^2-u},\\
 [A_q]_{x,t}\overline{[A_q]_{z,t}}+
 [A_q]_{x,-t}\overline{[A_q]_{z,-t}}
 &=\frac{8u(u+\alpha xz)}{(\alpha x^2-u)(\alpha z^2-u)}.
\end{align*}
Write $W=\operatorname{diag}(1,\lambda_1,\lambda_1,\ldots,
\lambda_k,\lambda_k)$.  The row-$0$/row-$x$ and row-$x$/row-$(-x)$
entries of $PW\overline P^{\,\mathsf T}$ are respectively
$b[-1+4\Phi(\alpha x^2)]=0$ and $2-8\Phi(\alpha x^2)=0$.
For $x^2\ne z^2$, set $a_x=\alpha x^2$, $a_z=\alpha z^2$,
$s=\alpha xz$.  Decomposing the second pair identity into simple
fractions gives the remaining off-diagonal entry as
\[
 2+8\left[\frac{a_z+s}{a_x-a_z}\Phi(a_z)
          -\frac{a_x+s}{a_x-a_z}\Phi(a_x)\right]=0.
\]
At the diagonal, evaluation at $T=0$ and differentiation of
Eq.~\eqref{supp:eq-split-interpolation} give
\begin{equation}
 h_0=1+4\sum_i\lambda_i=\alpha^k,\qquad
 h_{\pm x_i}=-16\alpha u_i\Phi'(\alpha u_i)
 =4\alpha u_i\frac{R'(\alpha u_i)}{Q(\alpha u_i)}.
 \label{supp:eq-split-weighted-gram}
\end{equation}
Every diagonal entry is nonzero because $R$ has simple roots disjoint
from those of $Q$.  The parent theorem makes $P$ superregular, so
Lemma~\ref{supp:lem-weight-normalization} and
Eq.~\eqref{supp:eq-general-projector} prove the assertions.
At $k=(q-1)/2$, $Q(T)=T^k-1$, $R(T)=T^k+1$, and every weight is $1$;
the parent identity is recovered exactly. \hfill$\square$

\section{The norm-one-circle parent}
\label{supp:sec-circle-parent}

Let $\mathcal U=\{z\in\F_{q^2}^\times:\Norm(z)=1\}$,
$\nu=(q+1)/2$, and $\mathcal H=\mathcal U^2$, a cyclic group of order
$\nu$.  Set $c=1$ when $q\equiv1\pmod4$ and choose
$c\in\mathcal U\setminus\mathcal H$ when $q\equiv3\pmod4$.
Then $(-c)^\nu=-1$, so $x+cy\ne0$ for $x,y\in\mathcal H$.
Choose $u$ of norm $-1$ such that
$t=\Tr_{\F_{q^2}/\F_q}(c\bar u)\notin\{0,-1,2,3/2\}$.
Here the trace is $\Tr_{\F_{q^2}/\F_q}(z)=z+\bar z$.
Put
\begin{equation}
 P_0=c+2u,\qquad \sigma=1+u/c,\qquad
 \Norm(b_\circ)=-2(1+t),\qquad
 a_\circ=-b_\circ\bar\sigma/\bar b_\circ.
 \label{supp:eq-circle-parameters}
\end{equation}
These choices always exist.  On the norm-$-1$ circle each trace fibre
has at most two elements (its elements solve $X^2-tX-1=0$), so at least
$(q+1)/2$ trace values occur.  This exceeds the four exclusions for
$q\ge9$.  For $q=3,5,7$ take $t=1$: choose a root $z$ of
$X^2-X-1$ for $q=3,7$, or $z=3\in\F_5$ for $q=5$, and set
$u=c\bar z$.  Thus the construction includes $q=3$.
Norm surjectivity supplies $b_\circ$, and direct multiplication gives
$\Norm(P_0-c)=-4$, $\Norm(P_0)=2t-3\notin\{0,1\}$,
and $\Norm(\sigma)=t\ne0$.

Define the $(\nu+1)\times(\nu+1)$ matrix $B_q$, with row labels
$\{0\}\cup\mathcal H$ and column labels $\{\infty\}\cup\mathcal H$, by
\begin{equation}
 [B_q]_{0,\infty}=a_\circ,\qquad
 [B_q]_{0,y}=[B_q]_{x,\infty}=b_\circ,\qquad
 [B_q]_{x,y}=\frac{2(P_0y+x)}{x+cy}.
 \label{supp:eq-circle-parent}
\end{equation}

\begin{theorem}[Circle parent]
\label{supp:thm-circle-parent}
The matrix $B_q$ is superregular and $B_q\overline{B_q}^{\,\mathsf T}=-I_{\nu+1}$.
It yields a Hermitian self-dual
$[q+3,(q+3)/2,(q+5)/2]_{q^2}$ MDS code and stabilizer
$\AME(q+3,q)$ and $\AME(q+2,q)$ states for every odd prime power $q$.
\end{theorem}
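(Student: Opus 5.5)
The plan mirrors the proof of Theorem~\ref{supp:thm-universal}: first Hermitian orthogonality via a Fourier decomposition on the cyclic group $\mathcal H$, then superregularity via a Cauchy-type factorization and root counting.

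\textbf{Hermitian orthogonality.} The core entry $[B_q]_{x,y}=2(P_0y+x)/(x+cy)$ depends only on $y/x\in\mathcal H$, so the core is a circulant indexed by the cyclic group $\mathcal H$ of order $\nu=(q+1)/2$. Its Fourier modes are characters $\chi_k(z)=z^k$, which lie in $\F_{q^2}$. Conjugation acts on $\mathcal H$ by inversion, so the mode pairing $F_k\overline{F_{-k}}$ from Lemma~\ref{supp:lem-fourier} should be replaced by $F_k\overline{F_k}$ or a reindexed analogue; the precise pairing has to be tracked. Write the kernel as a function of $r=y/x$:
\[
\kappa(r)=\frac{2(P_0r+1)}{1+cr}=\frac{2P_0}{c}+\frac{2(1-P_0/c)}{1+cr}.
\]
Expanding $1/(1+cr)$ as a geometric series on $\mathcal H$ uses $(-c)^\nu=-1$. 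This should give $F_k=\beta(-c)^{-k}$ for $k\ne0$, with $\beta=\nu^{-1}\cdot 2(1-P_0/c)\cdot 2^{-1}$ up to a normalization to be computed. Then $\Norm(P_0-c)=-4$ makes every nonzero mode contribute $-1$. The zero mode has a different value, involving $\Norm(P_0)$, the trace $t$ and $\sigma$. The constant border couples only to this mode. The two-by-two block formed by the corner, the border $b_\circ$ and the zero mode must equal $-I_2$, and this reduces to the scalar identities $\Norm(b_\circ)=-2(1+t)$, $a_\circ=-b_\circ\bar\sigma/\bar b_\circ$ and $\Norm(\sigma)=t$. I would verify these three identities directly. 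This mirrors Eq.~\eqref{supp:eq-border-cancel}.

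\textbf{Superregularity.} Rescale rows and columns so that the core becomes $\eta_1+\eta_2 x/(x+cy)$, which is a constant plus a Cauchy kernel $1/(x+cy)$, and so that the border becomes constant. A bordered Cauchy matrix with a constant border is superregular when the row points $\{x\}$ and the column points $\{-cy\}$ are disjoint and distinct. The obstacle is the extra constant $\eta_1$ in every core entry: it turns the matrix into a Cauchy matrix plus a rank-one perturbation. I would treat it as in Sec.~\ref{supp:sec-polynomial}. View each column as a rational function of $u$ with a simple pole at $-cy$, plus a constant; the border column is a pure constant. Each row is evaluation at $x$, and the border row is a limiting functional at $u=\infty$. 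A singular minor then gives a nonzero $f/d_J$ with $\deg f\le s-1$. Its vanishing conditions concern $f(x)+\text{const}\cdot d_J(x)$, so the constant does not merely add a root. The count of roots against the degree fails unless the border row or column supplies exactly the missing constraint. I expect this case analysis, with or without the border row and column, to be the main difficulty. If plain root counting fails, I would use the conjugation symmetry: $\bar x=x^{-1}$ on $\mathcal H$. This gives norm-type polynomials like $\mathcal N$ in Sec.~\ref{supp:sec-parity}, and the degree-parity argument applies there. The exclusions $t\notin\{0,-1,2,3/2\}$ presumably ensure that the relevant constants, namely $\Norm(P_0)\ne0,1$, $\Norm(\sigma)\ne0$ and $1+t\ne0$, do not vanish.

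\textbf{Consequences.} Once $B_q\overline{B_q}^{\,\mathsf T}=-I_{\nu+1}$ and superregularity hold, Sec.~\ref{supp:sec-codes} with $m=\nu+1=(q+3)/2$ gives the Hermitian self-dual $[q+3,(q+3)/2,(q+5)/2]_{q^2}$ MDS code. The projector of Eq.~\eqref{supp:eq-general-projector} then gives $\AME(q+3,q)$, and the one-party projection gives $\AME(q+2,q)$.
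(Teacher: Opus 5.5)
\textbf{Hermitian orthogonality.} This part follows the paper's argument. The pairing is $\Norm(\widehat f(k))$ because conjugation inverts $\mathcal H$, and the border identities you list are exactly the ones used. Fix the normalization, though. The correct values are $\widehat f(k)=\nu(P_0-c)(-c)^{\nu-k-1}$ for $1\le k<\nu$ and $\widehat f(0)=\sigma$, with $\nu=1/2$ in $\F_q$. Your tentative $\beta=2(1-P_0/c)$ would give $\Norm(\beta)=-16$, not $-1$.

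\textbf{Superregularity: the gap.} You leave this as ``the main difficulty'' and do not prove it. You correctly see that plain root counting fails: without the border row, the numerator has degree $s$ and only $s$ prescribed roots. But the border does not supply the missing constraint, and no degree-parity argument is needed. Two ideas are missing.

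First, the paper never handles minors containing column $\infty$ directly. Since $B_q^{-1}=-\overline{B_q}^{\,\mathsf T}$, Jacobi's complementary-minor identity gives $\det B_q[I,J]=\pm\det B_q\,\overline{\det B_q[I^c,J^c]}$, and $J^c$ avoids $\infty$ whenever $J$ contains it. This reduction is needed for your plan. The corner $a_\circ$ is incompatible with modelling the border column as a constant function while the border row is a point evaluation. At $v=0$ that model would require $a_\circ=cb_\circ^2/(2P_0)$, and at $v=\infty$ it would require $a_\circ=b_\circ^2/2$. Neither holds for admissible parameters, so your case with both border row and column would not go through as set up.

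Second, for minors avoiding $\infty$, use the kernel
\[
\mathcal K_\circ(v,y)=\frac{P_0y+v}{v+cy}=1+\frac{(P_0-c)y}{v+cy},
\]
and read row $0$ as evaluation at $v=0$, where the kernel is the constant $P_0/c$. Set $E=\sum_ye_y$. Every $\varphi=\sum_ye_y\mathcal K_\circ(\cdot,y)$ has the linked values $\varphi(\infty)=E$ and $\varphi(0)=P_0E/c$, whatever the null vector. The argument then runs as follows.
\begin{itemize}
\item Write $\varphi=N/D$ with $N=ED+(P_0-c)g$. The residues force $N\ne0$.
\item If $E=0$, then $\deg N\le s-1$ while $N$ has $s$ roots, which is impossible. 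So $E\ne0$.
\item If $0\in I$, the condition $\varphi(0)=P_0E/c=0$ is a contradiction.
\item Otherwise $N=E\prod_{x\in I}(v-x)$. Evaluating at the unselected point $v=0$ gives
\[
\frac{\prod_{x\in I}x}{\prod_{y\in J}y}=(-1)^sc^{s-1}P_0 .
\]
\item The left side has norm $1$ because $\mathcal H\subseteq\mathcal U$. The right side has norm $\Norm(P_0)=2t-3\ne1$, since $t\ne2$.
\end{itemize}

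This norm comparison at the extra point $v=0$ is the step your proposal lacks. With it, and the Jacobi reduction, the code and AME consequences follow as you describe.
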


\noindent\textit{Proof.}---For $\rho\in\mathcal H$, put
$f(\rho)=2(P_0\rho+1)/(1+c\rho)$.  The geometric identity
$(1+c\rho)\sum_{j=0}^{\nu-1}(-c\rho)^j=2$ gives
\[
 \widehat f(0)=\sigma,\qquad
 \widehat f(k)=\nu(-c)^{\nu-k-1}(P_0-c)
 \quad(1\le k<\nu),\qquad
 \Norm(\widehat f(k))=-1\quad(k\ne0).
\]
Here $\widehat f(k)=\sum_{\rho\in\mathcal H}f(\rho)\rho^k$ and
$\nu=1/2$ inside $\F_q$.  Since conjugation inverts every element of
$\mathcal H$, the product of the core with its Hermitian transpose has
eigenvalues $\Norm(\widehat f(k))$.  Its zero mode has eigenvalue $t$;
the core itself has row sum $\sigma$.
Adding the border changes the zero-mode eigenvalue to
$t+\nu\Norm(b_\circ)=-1$, leaves all nonzero modes unchanged,
and gives the same upper-left entry.  The off-diagonal border block
vanishes because $a_\circ\bar b_\circ+b_\circ\bar\sigma=0$.
Thus $B_q\overline{B_q}^{\,\mathsf T}=-I_{\nu+1}$.

To prove superregularity, first consider an $s\times s$ minor without
column $\infty$, with rows $I\subseteq\{0\}\cup\mathcal H$ and
columns $J\subseteq\mathcal H$.  Up to nonzero row factors its entries
are $\mathcal K_\circ(v,y)=(P_0y+v)/(v+cy)$; at row $0$ this constant is
$P_0/c\ne0$.  A nonzero null vector $(e_y)$ would give
\[
 \varphi(v)=\sum_{y\in J}e_y\mathcal K_\circ(v,y)
 =E+(P_0-c)\frac{g(v)}{D(v)},\qquad
 E=\sum_y e_y,\quad D(v)=\prod_{y\in J}(v+cy),\quad \deg g\le s-1,
\]
vanishing at all $s$ selected rows.  Its numerator is nonzero:
otherwise its residues force every $e_y=0$.  Therefore $E\ne0$,
since otherwise that numerator has degree at most $s-1$.
If $0\in I$, the equation $\varphi(0)=P_0E/c=0$ is a contradiction.
Otherwise the numerator equals $E\prod_{x\in I}(v-x)$.
Evaluating at $v=0$ gives
\[
 \frac{\prod_{x\in I}x}{\prod_{y\in J}y}
 =(-1)^s c^{s-1}P_0,
\]
which is impossible: the left side has norm $1$, while the right side
has norm $\Norm(P_0)\ne1$.  All minors without column $\infty$ are
therefore nonzero.  Since $B_q^{-1}=-\overline{B_q}^{\,\mathsf T}$, Jacobi's
complementary-minor identity transfers this result to all minors using
column $\infty$ (the full determinant is already nonzero).
The state recipe above completes the proof. \hfill$\square$

\section{Odd-subset circle descent}
\label{supp:sec-circle-descent}

\begin{theorem}[Circle descent]
\label{supp:thm-circle-descent}
Let $q$ be an odd prime power with $q\equiv1\pmod4$.
For every $1\le k\le(q+3)/4$ and every
$S\subseteq\mathcal H$ of size $2k-1$, the submatrix
$P=B_q[\{0\}\cup S,\{\infty\}\cup S]$ admits a superregular
normalization $B$ with $B\overline B^{\,\mathsf T}=-I_{2k}$.
It yields a Hermitian self-dual $[4k,2k,2k+1]_{q^2}$ MDS code and
stabilizer $\AME(4k,q)$ and $\AME(4k-1,q)$ states.
\end{theorem}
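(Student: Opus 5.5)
Superregularity of $P=B_q[\{0\}\cup S,\{\infty\}\cup S]$ comes for free: it is a square submatrix of the superregular matrix $B_q$ from Theorem~\ref{supp:thm-circle-parent}. Everything therefore reduces to finding weights as in Lemma~\ref{supp:lem-weight-normalization}, namely $W=\operatorname{diag}(\lambda_\infty,\lambda_y)_{y\in S}$ and $H$ diagonal, all with entries in $\F_q^\times$, such that $PW\overline P^{\,\mathsf T}=H$. Once these exist, Lemma~\ref{supp:lem-weight-normalization} produces $B$, and Eq.~\eqref{supp:eq-general-projector} together with Sec.~\ref{supp:sec-codes} supply the $[4k,2k,2k+1]_{q^2}$ code, the $\AME(4k,q)$ state, and the one-party projection to $\AME(4k-1,q)$.

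I would take $\lambda_\infty=1$ and mimic the split-descent interpolation of Theorem~\ref{supp:thm-split-descent}. Since $\bar x=x^{-1}$ on $\mathcal H$, I would rewrite $\overline{[B_q]_{z,y}}=2(\bar P_0 z+y)/(y+\bar c z)$, so that every off-diagonal Gram entry becomes a rational function in $y$ with simple poles at $y=-cx$ and $y=-\bar c^{-1}z$. The weighted sum over $y\in S$ is then $\sum_{y\in S}\lambda_y\,r_{x,z}(y)$, plus the border contribution $\Norm(b_\circ)$. I would introduce $Q(T)=\prod_{y\in S}(T-y)$ and choose $\lambda_y$ proportional to $\mathcal R(y)/Q'(y)$, with $\mathcal R(T)=\prod_{y\in S}(T+c^{-1}y)$ or a similar polynomial built from the pole locations, so that the partial-fraction generating function
\[
 \Phi(T)=\sum_{y\in S}\frac{\lambda_y}{T-y}
\]
takes prescribed constant values at all the points $-c^{-1}x$ and $-\bar c z$ for $x,z\in S\cup\{0\}$. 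Residue calculus would then turn each off-diagonal entry into a linear combination of such values of $\Phi$, and I would fix the normalization constant so that these combinations cancel exactly, as in $b[-1+4\Phi(\alpha x^2)]=0$ in the split case. The odd size $|S|=2k-1$ and the hypothesis $q\equiv1\pmod4$ (so $c=1$ and conjugation is inversion on $\mathcal H$) are what should make $\Phi$ a rational function with the right symmetry under $T\mapsto T^{-1}$. The bound $k\le(q+3)/4$ just says $2k-1\le\nu$.

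The main obstacle is \emph{rationality}: I must show that every $\lambda_y$ and every $h_i$ lies in $\F_q^\times$, not merely in $\F_{q^2}^\times$. My plan is to show $\bar\lambda_y=\lambda_y$ by conjugating the formula, using $\bar y=y^{-1}$ and the fact that $\prod_{y\in S}y$ has norm one, and then checking that the odd degree $2k-1$ makes the resulting sign factors cancel. This is where I expect the odd-subset restriction to be essential. Nonvanishing should follow because the roots of $\mathcal R$ ($-c^{-1}y\in-\mathcal H$) are disjoint from $S$, using $(-1)^\nu=-1$ when $c=1$ and $\nu$ is odd. The diagonal entries $h_i$ I would compute, as in Eq.~\eqref{supp:eq-split-weighted-gram}, from $\Phi'$ at the interpolation nodes and from $\Phi$ at infinity, which gives $h_0$. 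At $S=\mathcal H$ the construction should reproduce the parent identity with all weights equal to $1$, and I would use that as a consistency check.
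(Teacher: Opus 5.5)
Your framework matches the paper's: superregularity inherited from $B_q$, $\lambda_\infty=1$, barycentric weights making $PW\overline P^{\,\mathsf T}$ diagonal over $\F_q$, then Lemma~\ref{supp:lem-weight-normalization}. However, the one concrete weight you commit to does not work.

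\emph{Rationality fails.} With $c=1$, your choice $\lambda_y\propto\mathcal R(y)/Q'(y)$ with $\mathcal R(T)=\prod_{x\in S}(T+x)$ is $\lambda_y=\kappa\,y\,r_y$ for a global constant $\kappa$, where $r_y=\prod_{x\in S\setminus\{y\}}(y+x)/(y-x)$. Conjugation gives $\bar\lambda_y=(\bar\kappa/\kappa)\lambda_y/y^2$. Squaring is injective on $\mathcal H$ because $\mathcal H$ has odd order, so no $\kappa$ puts two of these weights in $\F_q$. Your planned check $\bar\lambda_y=\lambda_y$ therefore fails for every $k\ge2$.

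\emph{The Gram identity also fails.} Write $[B_q]_{x,y}=2K_x(y)$ with $K_x(y)=1+(P_0-1)y/(x+y)$. The row-$0$/row-$x$ entry is $\lambda_\infty a_\circ\bar b_\circ+2b_\circ\sum_y\lambda_y\overline{K_x(y)}$. With $\lambda_y=\kappa y r_y$ this sum equals $\kappa\bigl(\sum_y r_yy+(\bar P_0-1)x/2\bigr)$, which varies with $x$ because $\Norm(P_0-1)=-4\ne0$. Your own consistency check would catch this: at $S=\mathcal H$ your weights are proportional to $y$, not constant.

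\emph{Source of the slip.} You did not copy the split case faithfully. There $\Phi(T)=\sum_i\lambda_iu_i/(T-u_i)$ has residues $\lambda_iu_i$. Here the Gram entries likewise involve $\sum_y\lambda_y$, $\sum_y\lambda_y\,y/(x+y)$ and $\sum_y\lambda_y\,y^2/((x+y)(z+y))$. So the function that must take fixed values at the nodes is $\sum_y\lambda_y\,y/(T+y)$, not $\sum_y\lambda_y/(T-y)$.

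With that correction you arrive at the paper's proof. Take $\lambda_y=r_y/2$ and use
$H(T)=\sum_{y\in S}r_yy/(T+y)=\tfrac12\bigl(1+Q(T)/Q(-T)\bigr)$, which holds because $|S|$ is odd and $-1\notin\mathcal H$.

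\emph{Moments.} Evaluating at $T=0$ and at $x\in S$, and differentiating, gives $\sum_yr_y=1$, $H(x)=\tfrac12$ and $H'(x)=-1/(4xr_x)$. Hence $\sum_yr_yy^2/((x+y)(z+y))$ equals $\tfrac12$ for $x\ne z$ and $\tfrac12-1/(4r_x)$ for $x=z$.

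\emph{Cancellation.} The step you leave to ``fixing a normalization constant'' is the substantive one. It needs $\Norm(P_0-1)=-4$, $\sigma=(P_0+1)/2$, $\Norm(b_\circ)=-2(1+\Norm(\sigma))$ and $a_\circ\bar b_\circ=-b_\circ\bar\sigma$. These together yield $P\operatorname{diag}(1,(r_y/2)_{y\in S})\overline P^{\,\mathsf T}=\operatorname{diag}(-1,(-2/r_x)_{x\in S})$.

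\emph{Rationality.} It is then immediate, and this is where oddness of $|S|$ enters. Each of the $|S|-1=2k-2$ factors $(y+x)/(y-x)$ of $r_y$ changes sign under conjugation, and there is an even number of them.
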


\noindent\textit{Proof.}---Now $c=1$ and $-1\notin\mathcal H$.  Set
\begin{equation}
 r_y=\prod_{x\in S\setminus\{y\}}\frac{y+x}{y-x},\qquad
 Q(T)=\prod_{x\in S}(T-x),\qquad
 H(T)=\sum_{y\in S}\frac{r_yy}{T+y}
 =\frac12\left(1+\frac{Q(T)}{Q(-T)}\right).
 \label{supp:eq-circle-barycentric}
\end{equation}
Each factor of $r_y$ changes sign under conjugation; their number is
even, so $r_y\in\F_q^\times$.  The rational identity follows by
matching residues and the value at infinity, using
$r_y=-Q(-y)/(2yQ'(y))$.
Evaluating at $T=0,x$ gives $\sum_y r_y=1$ and $H(x)=1/2$.
Also $H'(x)=-1/(4xr_x)$, whence
\begin{equation}
 \sum_y r_y\frac{y^2}{(x+y)(z+y)}=\frac12\quad(x\ne z),\qquad
 \sum_y r_y\frac{y^2}{(x+y)^2}=\frac12-\frac1{4r_x}.
 \label{supp:eq-circle-moments}
\end{equation}
Indeed, these sums are respectively
$(zH(z)-xH(x))/(z-x)$ and $H(x)+xH'(x)$.
Put $K_x(y)=1+(P_0-1)y/(x+y)$, so
$[B_q]_{x,y}=2K_x(y)$ and
$\overline{K_z(y)}=\bar P_0-(\bar P_0-1)y/(z+y)$.
Substitution of these moments, $\Norm(P_0-1)=-4$, and
$\sigma=(P_0+1)/2$ gives
\[
 \sum_y r_y\overline{K_x(y)}=\bar\sigma,\qquad
 \sum_y r_yK_x(y)\overline{K_z(y)}
 =1+\Norm(\sigma)-\frac{\delta_{xz}}{r_x}.
\]
Together with $\Norm(b_\circ)=-2(1+\Norm(\sigma))$ and
$a_\circ\bar b_\circ=-b_\circ\bar\sigma$, these are exactly
\begin{equation}
 P\operatorname{diag}\bigl(1,(r_y/2)_{y\in S}\bigr)\overline P^{\,\mathsf T}
 =\operatorname{diag}\bigl(-1,(-2/r_x)_{x\in S}\bigr).
 \label{supp:eq-circle-weighted-gram}
\end{equation}
Both diagonals are nonsingular over $\F_q$.  Superregularity is inherited
from $B_q$, and Lemma~\ref{supp:lem-weight-normalization} and
Eq.~\eqref{supp:eq-general-projector} finish the proof. \hfill$\square$

For $q\equiv1\pmod4$, the two ladders cover all $3\le N\le q+5$;
the split ladder continues in the classes $N\equiv1,2\pmod4$ up to
$2q$.  In particular, its choice $k=(q+3)/4$ gives $\AME(q+5,q)$.
The circle parent is the top circle descendant when $q\equiv1\pmod4$;
when $q\equiv3\pmod4$, its parameters are already covered by the split
ladder.  Proper circle descendants have even parent length at most
$q-1$.  These are explicit weighted constructions, not assertions of
inequivalence for different subsets, and no non-GRS or nonpermutation
claim for all the shorter descendants is implied.

Table~\ref{supp:tab-family-summary} collects the classical codes,
their quantum counterparts, and the associated AME states.

\begin{table}[tbp]
\caption{Analytic families of Hermitian self-dual MDS codes and the AME
states they generate.  Throughout, $q$ is an odd prime power and $k$ is
an integer.  For split descent, $1\le k\le(q-1)/2$.
Circle descent requires $q\equiv1\pmod4$ and
$1\le k\le(q+3)/4$.  The quantum column lists the pure parent-state codes
with zero logical qudits; the last column gives normalized
computational-basis one-party projections.  Parameter ranges overlap,
as explained in the text.}
\label{supp:tab-family-summary}
\small
\setlength{\tabcolsep}{4pt}
\renewcommand{\arraystretch}{1.5}
\begin{ruledtabular}
\begin{tabular}{lcccc}
Construction &
\shortstack{Classical code\\over $\F_{q^2}$} &
Quantum code & Parent state & Projection \\
\colrule
Universal (Thm.~\ref{supp:thm-universal}) &
$[2q,q,q+1]$ &
$[[2q,0,q+1]]_q$ &
$\AME(2q,q)$ & $\AME(2q-1,q)$ \\
Split (Thm.~\ref{supp:thm-split-descent}) &
$[4k+2,2k+1,2k+2]$ &
$[[4k+2,0,2k+2]]_q$ &
$\AME(4k+2,q)$ & $\AME(4k+1,q)$ \\
Circle (Thm.~\ref{supp:thm-circle-parent}) &
$[q+3,\frac{q+3}{2},\frac{q+5}{2}]$ &
$[[q+3,0,\frac{q+5}{2}]]_q$ &
$\AME(q+3,q)$ & $\AME(q+2,q)$ \\
Circle descent (Thm.~\ref{supp:thm-circle-descent}) &
$[4k,2k,2k+1]$ &
$[[4k,0,2k+1]]_q$ &
$\AME(4k,q)$ & $\AME(4k-1,q)$ \\
\end{tabular}
\end{ruledtabular}
\end{table}

\section{From AME states to multiunitary matrices}
\label{supp:sec-multiunitary}

The AME condition has an equivalent formulation in terms of complex
matrices \cite{Goyeneche2015}.  Write the normalized state as
\begin{equation}
 \ket{\Psi_{2q}}
 =\sum_{t_1,\ldots,t_{2q}\in\F_q}
 \psi_{t_1\cdots t_{2q}}\ket{t_1,\ldots,t_{2q}}.
 \label{supp:eq-amplitude-tensor}
\end{equation}
These complex coefficients can be recovered from the rank-one projector
$P_q$ on the right-hand side of Eq.~\eqref{supp:eq-projector}.
Choose a computational basis string $\bm t_0$ with
$\bra{\bm t_0}P_q\ket{\bm t_0}>0$, which exists since $\tr P_q=1$.
One choice of global phase is fixed by
\begin{equation}
 \psi_{\bm t}
 =\frac{\bra{\bm t}P_q\ket{\bm t_0}}
 {\sqrt{\bra{\bm t_0}P_q\ket{\bm t_0}}}.
 \label{supp:eq-amplitudes-from-projector}
\end{equation}
For any set $S$ of $q$ sites, place the indices in $S$ in a row index
$\bm a\in\F_q^q$ and the complementary indices in a column index
$\bm b\in\F_q^q$, with a fixed ordering within each set.  Define
\begin{equation}
 [U_q^{(S)}]_{\bm a,\bm b}=q^{q/2}\psi_{\bm a,\bm b},
 \qquad
 U_q^{(S)}\bigl(U_q^{(S)}\bigr)^\dagger
 =q^q\rho_S=I_{q^q}.
 \label{supp:eq-multiunitary}
\end{equation}
Thus the amplitude tensor, multiplied by $q^{q/2}$, is unitary under
every division of its $2q$ indices into two groups of $q$.  This property
is called $q$-unitarity.  In particular,
$U_q:=U_q^{(\{1,\ldots,q\})}$ is a $q$-unitary matrix of order $q^q$.
Here $\dagger$ denotes the ordinary complex adjoint, unlike the
finite-field conjugation in Eq.~\eqref{supp:eq-gram}.
The relation is
$A_q\to[I_q\mid A_q]\to\ket{\Psi_{2q}}\to U_q$:
$A_q\in\F_{q^2}^{q\times q}$ specifies the stabilizer, whereas
$U_q\in\mathbb C^{q^q\times q^q}$ contains rescaled state amplitudes.
The former is not a submatrix of the latter.

\smallskip
\noindent\textit{Why these matrices are not permutation matrices.---}
A unitary monomial matrix has exactly one nonzero entry of modulus one
in every row and column; permutation matrices are the special case with
all these entries equal to one.  If $U_q$ were monomial, the corresponding
AME state would have exactly $q^q$ nonzero amplitudes, the minimum
possible support.  However, a necessary condition for a minimally
supported $\AME(N,d)$ state with $N\ge4$ is
$d\ge\lceil N/2\rceil+1$ \cite{Goyeneche2015,Bernal2017}.
For $N=2q$ and $d=q$, this would require $q\ge q+1$.
Consequently, no $\AME(2q,q)$ state admits a monomial $U_q$ in any local
product basis.  Individual Weyl operators in
Eq.~\eqref{supp:eq-projector} are monomial, but this does not make the
reshaped state amplitudes monomial.

\smallskip
\noindent\textit{The example $q=3$.---}
We now obtain every entry of the $27\times27$ matrix directly from
$A_3$ in Eq.~\eqref{supp:eq-A3}.  Use
$\F_9=\F_3[\theta]$ with $\theta^2=-1$, using this same $\theta$ in
Eq.~\eqref{supp:eq-Weyl}.  Separate the field coefficients as
\begin{equation}
 A_3=B+\theta C,\qquad
 B=\begin{pmatrix}0&1&1\\1&1&0\\1&0&1\end{pmatrix},\qquad
 C=\begin{pmatrix}2&1&1\\1&2&1\\1&1&2\end{pmatrix}.
 \label{supp:eq-A3-components}
\end{equation}
All matrices and vectors in the following algebra are over $\F_3$;
vectors are row vectors.  The matrices $B$ and $C$ are symmetric and
commute, and
\begin{equation}
 R:=C^{-1}=\begin{pmatrix}0&2&2\\2&0&2\\2&2&0\end{pmatrix},\qquad
 L:=BR=\begin{pmatrix}1&2&2\\2&2&1\\2&1&2\end{pmatrix}.
 \label{supp:eq-A3-inverse}
\end{equation}
For $\bm u=\bm r+\theta\bm s\in\F_9^3$, the codeword
$\bm c=\bm u[I_3\mid A_3]=\bm x+\theta\bm z$ has
\begin{equation}
 \bm x=(\bm r,\bm rB-\bm sC),\qquad
 \bm z=(\bm s,\bm rC+\bm sB).
 \label{supp:eq-A3-symplectic-coordinates}
\end{equation}
Since $C$ is invertible, $\bm x$ ranges over all of $\F_3^6$.
Eliminating $\bm s$ gives
\begin{equation}
 \bm z=\bm x\Gamma_3,\qquad
 \Gamma_3=\begin{pmatrix}L&-R\\-R&-L\end{pmatrix}
 =\Gamma_3^{\mathsf T}.
 \label{supp:eq-A3-quadratic-matrix}
\end{equation}
Here the upper-right block follows from $C+BRB=-R$, which is checked
directly from Eqs.~\eqref{supp:eq-A3-components} and
\eqref{supp:eq-A3-inverse}.

Let $\zeta=\exp(2\pi i/3)\in\mathbb C$, distinct from the finite-field
element $\theta$.  The normalized state fixed by these stabilizers is
\begin{equation}
 \ket{\Psi_6}
 =\frac1{27}\sum_{\bm t\in\F_3^6}
 \zeta^{\frac12\bm t\Gamma_3\bm t^{\mathsf T}}\ket{\bm t}.
 \label{supp:eq-A3-quadratic-state}
\end{equation}
The exponent is evaluated modulo three, with $1/2=2$ in $\F_3$.
To verify the phase convention, write
$Q(\bm t)=\bm t\Gamma_3\bm t^{\mathsf T}/2$ and
$\bm z=\bm x\Gamma_3$.  Symmetry gives
\[
 Q(\bm t+\bm x)-Q(\bm t)
 =\bm z\bm t^{\mathsf T}+\tfrac12\bm x\bm z^{\mathsf T}.
\]
This is precisely the phase acquired by $\ket{\bm t}$ under
$\bigotimes_jD(x_j,z_j)$ in Eq.~\eqref{supp:eq-Weyl}; hence every such
operator fixes Eq.~\eqref{supp:eq-A3-quadratic-state}.  The rank-one
property of Eq.~\eqref{supp:eq-projector} identifies it with the state
constructed from $A_3$.

Finally split $\bm t=(\bm a,\bm b)$ into two triples, ordered
lexicographically in $\F_3^3$.  Equations
\eqref{supp:eq-multiunitary} and \eqref{supp:eq-A3-quadratic-state} give
the complete matrix without printing its $729$ entries:
\begin{equation}
 [U_3]_{\bm a,\bm b}
 =\frac1{\sqrt{27}}
 \zeta^{\frac12(\bm aL\bm a^{\mathsf T}
                  -\bm bL\bm b^{\mathsf T})
              -\bm aR\bm b^{\mathsf T}},
 \qquad \bm a,\bm b\in\F_3^3.
 \label{supp:eq-U3-explicit}
\end{equation}
All $729$ entries have modulus $1/\sqrt{27}$ and their unnormalized
phases are cubic roots of unity.  Hence
\begin{equation}
 H_3=\sqrt{27}\,U_3\in BH(27,3).
 \label{supp:eq-H3-Butson}
\end{equation}
Here $BH(n,k)$ denotes the Butson class: its entries are $k$th roots of
unity and $HH^\dagger=nI_n$ \cite{Tadej2006,HadamardCatalogue}.
Since the state is $\AME(6,3)$, $U_3$ is \emph{3-unitary}: all
$\binom63=20$ balanced reshufflings are unitary.  Thus $H_3$ is a
multiunitary complex Hadamard matrix in the sense of
Ref.~\cite{Bruzda2024Multi}, with the normalization understood.

\section{Butson representatives in local graph-state bases}
\label{supp:sec-Butson}

The flatness of the preceding example extends to all our even-party
states after a suitable local basis change.  It need not hold for the
matrices $U_q$ in their original computational bases.

\begin{corollary}[Multiunitary Butson representatives]
\label{supp:cor-Butson}
Every stabilizer $\AME(2m,q)$ state constructed here, with $q=p^e$ odd,
has a locally equivalent representative of the form
\begin{equation}
 \ket{\widetilde\Psi}
 =q^{-m}\sum_{\bm t\in\F_p^{2em}}
 \zeta_p^{Q(\bm t)}\ket{\bm t},
 \qquad \zeta_p=e^{2\pi i/p},
 \label{supp:eq-graph-state}
\end{equation}
where $Q$ is quadratic over $\F_p$ and each group of $e$ prime-dimensional
indices labels one original $q$-dimensional party.  For any balanced cut
of these $2m$ parties, the rescaled amplitude matrix $\widetilde U$ is
unitary, and
\begin{equation}
 H=q^{m/2}\widetilde U\in BH(q^m,p)\subseteq BH(q^m,q).
 \label{supp:eq-Butson-family}
\end{equation}
In particular, the universal family gives $q$-unitary complex Hadamard
matrices of order $q^q$ for every odd prime power $q$.
\end{corollary}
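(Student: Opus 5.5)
The plan is to reduce the stabilizer state to a graph-state (quadratic-form) representative by working over $\F_p$ and choosing a local symplectic frame. First, I will restrict scalars from $\F_q$ to $\F_p$. Fix an $\F_p$-basis of $\F_q$ together with its trace-dual basis. Each $q$-dimensional party then becomes $e$ qudits of prime dimension $p$, and each Weyl operator $D(x,z)$ of Eq.~\eqref{supp:eq-Weyl} becomes, up to the same phase convention, a tensor product of $p$-dimensional Weyl operators. The trace pairing $\Tr_{\F_q/\F_p}(zt)$ is exactly the standard dot product in these dual coordinates. The stabilizer group $\{D(\bm c):\bm c\in\mathcal C\}$ then becomes a maximal isotropic subspace $\mathcal S\subseteq\F_p^{2n}\times\F_p^{2n}$ with $n=em$. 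This holds because the symplectic form $\sum_j(z_jx'_j-x_jz'_j)$, composed with the trace, stays nondegenerate, and $|\mathcal S|=q^{2m}=p^{2n}$. This change is local: it is a fixed unitary on each party's $\mathbb C^q\cong(\mathbb C^p)^{\otimes e}$.

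Second, I will invoke the standard fact that every stabilizer state in prime dimension $p$ is local-Clifford equivalent to a graph state. My concrete route is the following. Write $\mathcal S$ as the row space of $[X\mid Z]$ over $\F_p$. Apply local Fourier gates (qudit-wise swaps $x\leftrightarrow -z$) on a suitable set of qudits so that the $X$-block becomes invertible; a qudit-wise Gaussian elimination/pivoting argument shows this is possible since the $2n$ columns of $[X\mid Z]$ span rank $n$. Then $\mathcal S=\{(\bm x,\bm x\Gamma)\}$ with $\Gamma$ symmetric, by isotropy. Exactly as in the $q=3$ computation of Eqs.~\eqref{supp:eq-A3-quadratic-matrix}--\eqref{supp:eq-A3-quadratic-state}, the state $p^{-n/2}\sum_{\bm t}\zeta_p^{\frac12\bm t\Gamma\bm t^{\mathsf T}}\ket{\bm t}$ is fixed by every element of $\mathcal S$. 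The identity $Q(\bm t+\bm x)-Q(\bm t)=\bm z\bm t^{\mathsf T}+\frac12\bm x\bm z^{\mathsf T}$ verifies this directly. Rank one of the projector then identifies it with the transformed state. Since $p^{-n/2}=q^{-m}$, this gives Eq.~\eqref{supp:eq-graph-state} with $Q(\bm t)=\frac12\bm t\Gamma\bm t^{\mathsf T}$. The Fourier gates act within single $p$-dimensional factors, so the whole transformation is local with respect to the $2m$ original parties.

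Third, I will check that local unitaries preserve the AME marginals. Hence for every balanced cut, $\widetilde U$ with entries $q^{m/2}\widetilde\psi_{\bm a,\bm b}$ is unitary, by the argument of Eq.~\eqref{supp:eq-multiunitary}. Its entries are $q^{m/2}q^{-m}\zeta_p^{Q}$. So $H=q^{m/2}\widetilde U$ has unimodular $p$th-root entries and satisfies $HH^\dagger=q^mI$, which means $H\in BH(q^m,p)$. The inclusion $BH(q^m,p)\subseteq BH(q^m,q)$ holds because $p\mid q$. Specializing to $m=q$ and the matrix $A_q$ of Theorem~\ref{supp:thm-universal} gives the final claim.

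The main obstacle is the second step: making the pivoting to an invertible $X$-block rigorous with purely local operations, and handling the phase convention $\frac12\bm x\bm z^{\mathsf T}$ after the trace-dual restriction. If the direct argument gets messy, I would first cite the known theorem that qudit stabilizer states of prime dimension are LC-equivalent to graph states. For the pivoting itself, I would choose a maximal set of linearly independent columns of $X$ and complete it with $Z$-columns at the complementary positions. Isotropy, together with rank $n$, guarantees such a complementary choice.
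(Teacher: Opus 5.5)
Your proposal is correct and follows the paper's proof. The paper also restricts scalars to $\mathbb{F}_p$ via a trace-dual basis, so that each party becomes $e$ qudits of prime dimension $p$ and all operations stay local to the original parties. It then uses the prime-dimensional local-Clifford equivalence of stabilizer states to graph states, which the paper simply cites from Bahramgiri--Beigi while you also sketch the pivoting and quadratic-form argument, and finally it notes that local unitaries preserve the AME marginals. One bookkeeping slip: you define $n=em$, but later use $n$ for the number $2em$ of $p$-dimensional qudits, and $p^{-n/2}=q^{-m}$ holds only for $n=2em$.
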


\noindent\textit{Proof.}---For odd prime $p$, every stabilizer state is
local-Clifford equivalent to a graph state
\cite[Lemma~7]{BahramgiriBeigi2007}.  For $q=p^e$, choose an $\F_p$ basis
of $\F_q$ and its trace-dual basis.  The finite-field Weyl operators then
act as tensor products on $e$ $p$-dimensional factors per party.  Applying
the prime-dimensional result to these $2em$ factors and regrouping them
therefore uses only operations local with respect to the original
$2m$ parties.  The graph-state amplitudes have the form
Eq.~\eqref{supp:eq-graph-state}. Local operations preserve the AME marginals, so every balanced
reshaping of the amplitudes, multiplied by $q^{m/2}$, is unitary.
Its entries have modulus $q^{-m/2}$, and multiplication by $q^{m/2}$
leaves $p$th roots of unity.  Since $p\mid q$, these are also $q$th roots.
\hfill$\square$

For the universal matrices, this basis change is essential when $q\ge5$.
Write $A_q=B+\theta E$ with $B,E$ over $\F_q$.  In the core of $E$, the
columns labeled by $y$ and $-y$ are opposites; their pair sums have the
same border entry.  Differences of two such pair sums give a nonzero
kernel vector for $E$, so $E$ is singular.  A nonzero row vector
$\bm sE=0$ then gives the codeword
$\theta\bm s[I_q\mid A_q]$ with zero $X$ component, hence a nonidentity
pure-$Z$ stabilizer.  Its fixed state has zero computational-basis
amplitudes.  Thus the original $U_q$ is not flat for $q\ge5$;
Eq.~\eqref{supp:eq-Butson-family} concerns $\widetilde U$, not $U_q$.

As ordinary Hadamard matrices, these graph representatives belong to
the elementary-abelian Fourier class. Across a fixed balanced cut, the quadratic phase splits into
row and column terms and a bilinear cross term over $\F_p$.  Unitarity
makes its cross block invertible.  Removing the row and column phases
and permuting columns by this block gives the unnormalized Fourier
matrix $F_p^{\otimes em}$, where $(F_p)_{ab}=\zeta_p^{ab}$.
Such ordinary Hadamard equivalences need not factor into one-party
operations or preserve multiunitarity across the other cuts; the
multipartite tensor structure is additional information.

\section{Sources for the AME existence map}
\label{supp:sec-map-sources}

The Letter's existence map uses the archived Huber--Wyderka table
of 8 June 2026 as its historical baseline \cite{AMETable},
supplemented by standard constructions and subsequent results
\cite{Raissi2018,HuberGrassl2020,Shi2026,BevinsBidav2026}.
The literature comparison uses a cutoff of 5 September 2026.
The vertical rule marks the snapshot's $d=11$ boundary; cell coordinates
are $(d,N)$.

For $N=2q$, the cases $q=3,5,7$ were previously known
\cite{GrasslGulliver2008,GrasslGulliver2009,GrasslRotteler2015},
and $q=9$ is the first unsettled member in that snapshot.
The $(13,16)$ cell uses the $[[16,0,9]]_{13}$ point in Fig.~2 of
Ref.~\cite{GrasslRotteler2015}; its one-party projection gives $(13,15)$.
That figure contains no $[[20,0,11]]_q$ point for $q=13$ or $17$.
This records absence from that compilation, not nonexistence or
absence from the wider literature.

For $\AME(4,6)$, Ball and Simoens show that an all-product-officer
realization is impossible: the quantum officers require entanglement
\cite{BallSimoens2026}.  This is distinct from the stabilizer and graph-state
obstructions used in the overlays \cite{Cha2026,Wojcik2026}.
The $\mathbb Z_d$ graph-state result of Ref.~\cite{Wojcik2026} is not
a finite-field stabilizer no-go theorem for $d=2^e$.
Refs.~\cite{Danielsen2012} and \cite{BallMorenoSimoens2025}
supply the indicated stabilizer obstructions at $(3,11)$ and $(4,8)$,
respectively.  Unmarked white cells remain unresolved in the cited
sources.

\end{document}